    \let\MYcaption\@makecaption
    \let\@makecaption\MYcaption
\DeclareMathOperator*{\argmin}{arg\,min}
\DeclareMathOperator{\rvi}{RVI}
\newtheorem{theorem}{Theorem}
\newtheorem{lemma}{Lemma}
\newtheorem{remark}{Remark}
\newtheorem{definition}{Definition}
\newtheorem{proposition}{Proposition}
\newtheorem{corollary}{Corollary}
\begin{document}

\title{Semantic-Aware Remote Estimation of Multiple Markov Sources Under Constraints}
\author{Jiping~Luo and Nikolaos~Pappas,~\IEEEmembership{Senior~Member,~IEEE}
\thanks{This work has been supported in part by the Swedish Research Council (VR), ELLIIT, the Graduate School in Computer Science (CUGS), the European Union (ETHER, 101096526 and 6G-LEADER, 101192080), and the European Union's Horizon Europe research and innovation programme under the Marie Skłodowska-Curie Grant Agreement No 101131481 (SOVEREIGN). This paper was presented in part at the IEEE PIMRC 2024 \cite{jipingPIMRC2024} and WiOpt 2024 \cite{jipingWiOpt2024}.}
\thanks{The authors are with the Department of Computer and Information Science, Link\"oping University, Link\"oping 58183, Sweden (e-mail: jiping.luo@liu.se; nikolaos.pappas@liu.se).}
}

\maketitle
\begin{abstract}
    This paper studies the remote estimation of multiple Markov sources over a lossy and rate-constrained channel. Unlike most existing studies that treat all source states equally, we exploit the \emph{semantics of information} and consider that the remote actuator has different tolerances for the estimation errors. We aim to find an optimal scheduling policy that minimizes the long-term \textit{state-dependent} costs of estimation errors under a transmission frequency constraint. The optimal scheduling problem is formulated as a \emph{constrained Markov decision process} (CMDP). We show that the optimal Lagrangian cost follows a piece-wise linear and concave (PWLC) function, and the optimal policy is, at most, a randomized mixture of two simple deterministic policies. By exploiting the structural results, we develop a new \textit{intersection search} algorithm that finds the optimal policy using only a few iterations. We further propose a reinforcement learning (RL) algorithm to compute the optimal policy without knowing \textit{a priori} the channel and source statistics. To avoid the ``curse of dimensionality" in MDPs, we propose an online low-complexity \textit{drift-plus-penalty} (DPP) algorithm. Numerical results show that continuous transmission is inefficient, and remarkably, our semantic-aware policies can attain the optimum by strategically utilizing fewer transmissions by exploiting the timing of the important information.
\end{abstract}
\begin{IEEEkeywords}
Semantic communications, remote estimation, networked control systems, constrained Markov decision process, Lyapunov optimization, reinforcement learning.
\end{IEEEkeywords}

\section{Introduction}\label{sec:introcution}
\IEEEPARstart{R}{emote} estimation of stochastic processes is a fundamental problem in networked control systems (NCSs)\cite{Aditya2017, NikolaosGoalOriented}. However, such applications confront critical scalability challenges due to the scarcity of communication resources and the immensity of sensory data\cite{TIISurvey}. Therefore, there is a need to reduce the amount of data communicated between sensors, plants, and actuators. This is often done through various metrics that capture the \textit{value} of communicating and determine whether a new measurement has a positive impact on the control system \cite{gielis2022critical, Marios}. Despite significant research efforts, most existing communication protocols are context- and content-agnostic, leading to inefficient communications and degraded performance\cite{Marios}.

The primary objective has been to minimize the estimation error (i.e., the \textit{distortion} between the source state at the sensor side and the reconstructed state at the receiver\cite{Karl2013RE, leong2020deep, Aditya2017, pezzutto2022transmission, Cocco2023}). It indicates that information is valuable when it is accurate. However, in practice, this is not always the case, as \textit{the actuator may have different tolerance for estimation errors of different states or in different situations}. Consider, for example, a remotely controlled vehicle sending status messages to a central node and following instructions to ensure safe and successful operations. While inaccurate estimates of normal status are somewhat acceptable, estimate errors of emergent messages (e.g., off-track or close to obstacles) can cause significant safety hazards and must be avoided. Therefore, it is essential to factor into the communication process the \textit{semantics} (i.e., state-dependent significance, context-aware requirements, and goal-oriented usefulness) of status updates and prioritize the information flow efficiently\cite{Marios, Petar}. 

Information \textit{freshness}, measured by the Age of Information (AoI), that is, the time elapsed since the newest received update at the destination\cite{pappas2023age}, has recently been employed in NCSs\cite{WiSwarm, kutsevol2023experimental, jayanth2023distortion, VoI, AoI_onur, AoI_estimation}. It implies that status updates are more valuable when they are fresh. However, AoI alone cannot fully characterize the estimation error or meet the requirements of NCSs\cite{VoI, AoI_onur}. For linear time-invariant (LTI) systems, the error covariance, also known as the Value of Information (VoI)\cite{VoI}, is a monotonically non-decreasing function of AoI\cite{Karl2013RE,leong2020deep}. For Wiener processes, the discrepancy between the source signal and the newest received sample follows a Gaussian distribution with AoI as its covariance\cite{AoI_estimation}. Although AoI is closely related to estimation error in the above systems, the age-optimal policy is not optimal in minimizing the estimation error\cite{VoI, AoI_estimation}. Furthermore, AoI ignores the source evolution and the context of the application. 

Several metrics have been introduced to address the shortcomings of AoI and distortion\cite{AoII_TWC, AoII_GC, niko2019statechange, jayanth2023distortion, UoI_estimation, AoA}. Age of Incorrect Information (AoII)\cite{AoII_TWC, AoII_GC}, defined as a composite of distortion and age penalties, captures the cost of not having a correct estimate for some time. State-aware AoI \cite{niko2019statechange, jayanth2023distortion} accounts for the state-dependent significance of stochastic processes. A weighted distortion metric, namely Urgency of Information (UoI)\cite{UoI_estimation}, incorporates context awareness through adaptive weights. Age of Actuation (AoA) \cite{AoA} is a more general metric than AoI and becomes relevant when the information is utilized to perform actions in a timely manner. However, these metrics treat different estimation errors equally.

A new semantics-aware metric, namely Cost of Actuation Error (CAE), was first introduced in \cite{NikolaosGoalOriented} to account for state-dependent significance. Specifically, the CAE prioritizes information flow according to the significance of states, or more precisely, the potential control risks or costs incurred without correct estimates of these states. The problem of remote estimation of a single discrete-time finite-state Markov source was further studied in \cite{Salimnejad2023TCOM, Salimnejad2023JCN, EmmanouilMinizationCAE, zakerisemantic}. The work in \cite{Salimnejad2023TCOM, Salimnejad2023JCN} analyzed a low-complexity randomized policy that takes probabilistic actions independent of the source evolution and the real-time estimation errors. The optimal policy in resource-constrained systems was presented in \cite{EmmanouilMinizationCAE}. These works consider a single-source scenario, assuming that the source or channel statistics are known \textit{a priori}. However, acquiring such information before deployment may be costly or even not feasible. Theoretical results regarding the existence and structure of the optimal policy have not been developed.

This paper considers a general case where an agent schedules the update time of multiple sources over a lossy and rate-constrained channel. The system consists of \textit{slowly} and \textit{rapidly} evolving sources. We aim to find an optimal schedule to minimize average CAE under a transmission frequency constraint. We identify this problem as a \textit{constrained Markov Decision Process} (CMDP), which is proven to be computationally prohibitive\cite{altman1999constrained}. Moreover, finding an optimal policy in unknown environments is significantly more complicated. 

The main contributions of this paper are as follows:
\begin{itemize}
    \item We formulate the semantic-aware multi-source scheduling problem as an average-cost CMDP. Then, we relax it as an unconstrained two-layer problem using the \textit{Lagrangian technique}. Our analysis reveals that the optimal Lagrangian cost is a piece-wise linear and concave (PWLC) function, and the optimal policy is, at most, a randomized mixture of two simple deterministic policies. By exploiting these findings, we develop a new policy search method called \textit{intersection search} that can find the optimal policy using only a few iterations. We further propose a \textit{Q-learning} algorithm that learns an optimal policy for unknown source and channel statistics.
    \item To address the ``curse of dimensionality'' of MDPs, we propose a low-complexity online policy based on the \textit{Lyapunov optimization theorem}. We demonstrate that this policy satisfies the transmission frequency constraint and show its near-optimality.
    \item We conduct extensive simulations to validate the performance of the proposed policies. We show that continuous transmission is inefficient and yields minimal (even no) performance gains for the communication goal; instead, transmitting less but more significant information can also lead to an optimal outcome.
\end{itemize}

The rest of this paper is organized as follows. Sections~\ref{sec:system_model} and~\ref{sec:problem_formulation} introduce the system model and formulate the optimal scheduling problem. Section~\ref{sec:main_results} presents the main structural results. In Section~\ref{sec:solutions}, we develop algorithms for both known and unknown environments. Numerical results are provided in Section~\ref{sec:results}, and we conclude our work in Section~\ref{sec:conclusion}.

\begin{figure}[htbp]
    \centering
    \includegraphics[width=\linewidth]{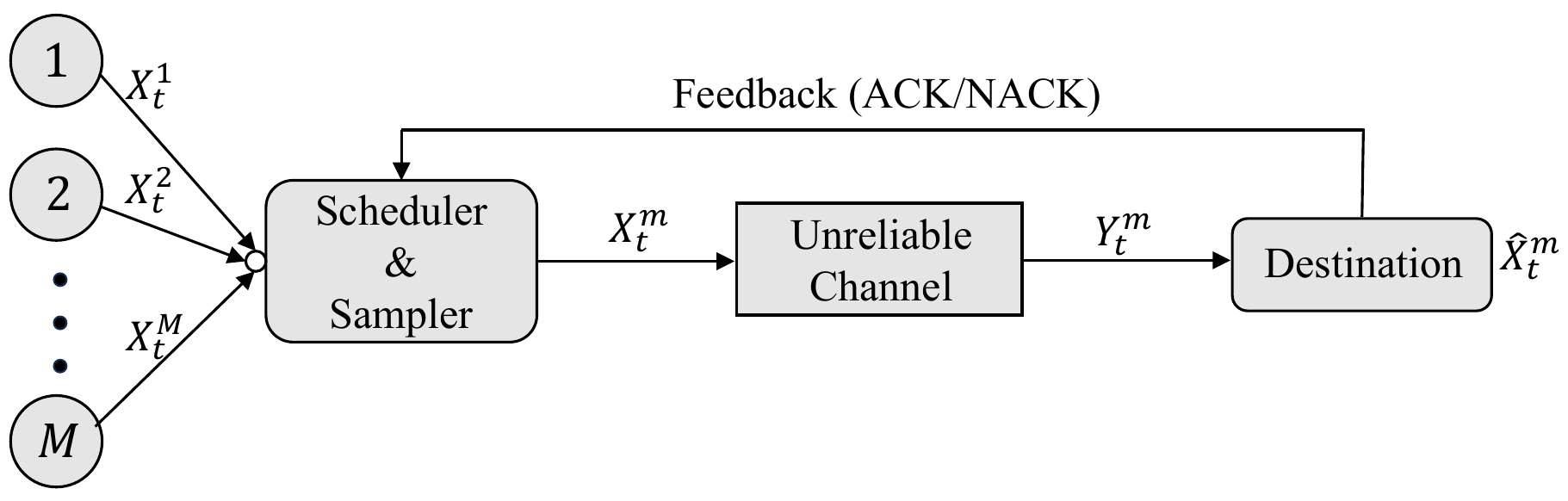}
    \vspace{-0.2in}
    \caption{Remote estimation of multiple Markov sources with feedback.}
    \label{fig:systemmodel}
\end{figure}

\section{System Model}\label{sec:system_model}
\subsection{System Description}
We consider a slotted-time communication system shown in Fig.~\ref{fig:systemmodel}, where the destination remotely tracks the states of multiple stochastic processes for achieving a specific control task. We illustrate each component as follows:
\subsubsection{Sources}
Denote $\mathcal{M} = \{1, 2, \ldots, M\}$ as the index set of the sources. Each source $m\in \mathcal{M}$ is modeled by a $N_m$-state homogeneous discrete-time Markov chain (DTMC)\footnote{DTMC is widely applied in safety-critical systems, such as autonomous driving\cite{althoff2009,ure2019} and cyber-physical security\cite{shi2017cyber}, for modeling operation modes, control risks, and working conditions. A prominent feature of these applications is the state-dependent and context-aware costs or risks.}, $\{X_t^m\}_{t\geq1}$, where $t = 1, 2, \ldots$ is the discrete time index. The slot length is defined as the interval between two successive state changes. The value of $X_t^m$ is chosen from a finite set $\mathcal{X}^m = \{1, 2, \ldots, N_m\}$. The state transition probabilities of each source $m$ are represented by a stochastic matrix $\mathbf{Q}^m$, where $\mathbf{Q}^m_{i,j} = \mathbb{P}(X_{t+1}^m = j|X_t^m = i)$ is the transition probability from state $i$ to state $j$ between two consecutive time slots. A DTMC is said to be \textit{slowly evolving} if it tends to stay in the same state across two consecutive time slots, i.e., $\mathbf{Q}$ is diagonally dominant. Otherwise, we call it \textit{rapidly evolving}. We consider both slowly and rapidly evolving sources. 

\subsubsection{Agent}
An agent observes the sources' realizations and decides at the beginning of each time slot which source to sample. After generating a status update of some source, the transmitter sends the packet to the destination over a wireless channel. We assume that only one packet can be sent at a time. Each packet contains the state of the source and a header with two fields: a timestamp and an index. The timestamp shows the generation time of the packet, and the index indicates the source from which the packet is sampled. Thus, the receiver knows exactly each packet's generation time and origin.

We denote the sampling decision at time $t$ by an integer variable $A_t\in \mathcal{A}= \{0, 1, \ldots, M\}$, where $A_t = 0$ means no source is selected and the transmitter remains salient, while $A_t = m, m\neq 0$ means taking a sample from source $m$ and transmitting it immediately. In practice, the transmitter can only transmit packets intermittently due to resource constraints and network contentions. So we impose a constraint on the transmission frequency that is no larger than $F_{\max}\in (0, 1]$. 

\subsubsection{Channel}
We consider an unreliable channel with \textit{i.i.d.} packet drops and (possibly) transmission delays. The channel state $H_t$ is $1$ if the packet is successfully decoded at the receiver and $0$ otherwise. We assume that the channel state information (CSI) is available only at the destination node. Therefore, the agent does not know whether a status update will reach the destination before a transmission attempt. We define the success transmission probability as $p_s = \mathbb{E}[H_t]$. In this paper, we consider two specific delay cases (see Fig.~\ref{fig:systemstate}):
\begin{itemize}
    \item \textit{Zero-delay}: The time duration of the sampling and transmission process is negligible compared to the slot length (i.e., the interval between two successive state changes). A sampled packet reaches its destination immediately after being generated.
  \item \textit{One-delay}: A transmission attempt occupies (nearly) one slot length. Thus, the receiver receives an update from the transmitter with a delay of one slot.
\end{itemize}
We denote zero-delay as $d = 0$ and one-delay as $d = 1$. While channel delays in practice are more complex and variable, these two cases provide a foundational understanding and allow for analytical tractability. Moreover, many practical systems operate under strict delay constraints, where zero- and one-delay approximations are reasonable. We refer readers to \cite{AoI_estimation} for an analysis of the impact of random delays on distortion and AoI.

\subsubsection{Receiver}
The receiver sends an acknowledgment (ACK) signal to the transmitter after each transmission to indicate the transmission success or failure. We assume that these feedback packets are delivered instantly and error-free. The receiver needs to reconstruct the states of the sources from under-sampled and (possibly) delayed measurements, and then forward them to the controller/actuator for control. We assume that the receiver does not know the sources' evolution patterns and updates its estimate using the latest received packet, i.e.,
\begin{align}
	\hat{X}_{t}^m = \begin{cases}
		X_{t-d}^m, &\text{if}~A_{t-d} = m, H_{t-d} = 1, d=0,1,\\
		\hat{X}_{t-1}^m, & \text{otherwise}.
	\end{cases} \label{eq:estimate}
\end{align}

\begin{figure}[t]
    \centering
    \scalebox{0.85}{\begin{tikzpicture}[>=stealth, scale=1.0]


\draw[->, thick] (0,0) -- (8,0) node[right] {};
\draw[->, thick] (0,-2) -- (8,-2) node[right] {};

\node[left] at (0,0) {Tx};
\node[left] at (0,-2) {Rx};

\draw[dashed] (.5,0.5) -- (.5,-2.5) node[below] {$t$};
\draw[dashed] (7,0.5) -- (7,-2.5) node[below] {$t+1$};

\draw (.8,0.) -- (.8,.2);
\node at (.8,0.4) {$X_t$};
\draw (1.2,0.) -- (1.2,.2);
\node at (1.2,0.4) {$A_t$};
\draw (1.8,0.) -- (1.8,.2);
\node at (1.8,0.4) {$\hat{X}_{t+1}$};

\draw[->] (1.2,0) -- (1.6,-2);
\node at (1.6,-2.3) {$\hat{X}_{t}$};
\draw[->, dashed] (1.6,-2) -- (1.8,0) node[pos=0.5, right] {\small ACK};

\node at (4.,-2.8) {(a)~zero-delay};


\draw[->, thick] (0,-4) -- (8,-4);
\draw[->, thick] (0,-6) -- (8,-6);

\node[left] at (0,-4) {Tx};
\node[left] at (0,-6) {Rx};

\draw[dashed] (.5,-3.5) -- (.5,-6.5) node[below] {$t$};
\draw[dashed] (7,-3.5) -- (7,-6.5) node[below] {$t+1$};

\draw (.8,-4.) -- (.8,-3.8);
\node at (.8,-3.6) {$X_t$};
\draw (1.2,-4.) -- (1.2,-3.8);
\node at (1.2,-3.6) {$A_t$};
\draw (6.8,-4.) -- (6.8,-3.8);
\node at (6.8,-3.6) {$\hat{X}_{t+1}$};

\draw[->] (1.2,-4) -- (6.6,-6);
\draw[->, dashed] (6.6,-6) -- (6.8,-4) node[pos=0.5, left] {\small ACK};

\node at (6.6,-6.3) {$\hat{X}_{t+1}$};

\node at (4.,-6.8) {(b)~one-delay};

\end{tikzpicture}}
    \vspace{-0.1in}
    \caption{An illustration of the zero- and one-delay cases.}
    \label{fig:systemstate}
\end{figure}

\subsubsection{Cost of Actuation Error}
Let $S_t$ denote the system state of the MDP. As shown in Fig.~\ref{fig:systemstate}, the agent observes the sources' states $X_t = (X^1_t, \ldots, X^M_t)$ at the beginning of time slot $t$ and then makes a decision $A_t$ based on the available information. In the zero-delay case, the agent has access to the previously reconstructed state $\hat{X}_{t-1}$, but does not know $\hat{X}_t$ at the time of decision-making. Therefore, the system state for the zero-delay case is defined as $S_t = (X_t, \hat{X}_{t-1})$. In contrast, in the one-delay case, the feedback signal was received at the end of slot $t-1$, and the receiver will update its estimate at the beginning of slot $t$ using either $X_{t-1}$ or $\hat{X}_{t-1}$. Therefore the agent knows $\hat{X}_t$ when making the decision $A_t$. So we define $S_t = (X_t, \hat{X}_t)$ for the one-delay case. We also define the state of subsystem $m$ at time $t$ as $S^m_t = (X^m_t, \hat{X}^m_{t-1})$ for zero-delay case and $S^m_t = (X^m_t, \hat{X}^m_{t})$ for one-delay case. 

The discrepancy between the source realizations and the reconstructed states can cause actuation errors. In practice, some source states are more important than others, and thus some actuation errors may have a larger impact than others. Let $\delta^m_{i,j} \triangleq \delta^m (X^m = i, \hat{X}^m = j)$ denote the potential control cost of the sub-system $m$ being at state $s^m = (i,j)$, where $\delta^m: \mathcal{X}^m\times\mathcal{X}^m\rightarrow\mathbb{R}_{\geq0}$ is a pre-determined non-commutative function\footnote{Function $\delta^m$ is non-commutative, i.e., for any $i, j \in \mathcal{X}^m$, $\delta^m_{i,j}$ is not necessarily equal to $\delta^m_{j,i}$. Moreover, we consider that $\delta^m_{i,j}$ needs not equal to $\delta^m_{i, k}$. This is because different actuation errors may have different repercussions for the system's performance.} capturing \textit{state-dependent significance} of estimation errors\cite{NikolaosGoalOriented}. Then, the \textit{cost of actuation error} (CAE) of taking an action in a certain system state is defined as the weighted sum of all subsystems' control costs
\begin{align}
	c(S_t, A_t, S_{t+1})
	=\sum\limits_{m \in \mathcal{M}} \omega_m  \delta^m(X_{t + d}^m, \hat{X}_{t + d}^m),\label{eq:CAE}
\end{align}
where $\omega_m$ represents the relative \textit{significance} of source $m$. 

\begin{remark}
    The cost function \eqref{eq:CAE} depends on the subsequent system state, so the agent can only evaluate the effectiveness of transmission attempts after receiving feedback packets. It implies that the agent should know a priori the expected CAE of taking an action in a certain state. Therefore, the agent must consider several factors in the decision-making process, such as the weight ($\omega_m$), the channel conditions ($d$ and $p_s$), the source evolution pattern ($\mathbf{Q}^m$), and the system resource constraint ($F_{\max}$) to make more informed decisions.
\end{remark}

\begin{remark}
    It is worth mentioning that, although the cost function in \eqref{eq:CAE} is decomposable, the results in this paper extend automatically to coupled collective costs. For example, in collaborative beamforming\cite{lipski2024age}, each transmitter compensates for its channel phase offsets using up-to-date channel knowledge such that the received signals from different transmitters add up coherently at the receiver to improve the overall beamforming gain. In this context, the discrepancy between the true channel phase $X(t)$ and the estimated state $\hat{X}(t)$ leads to incorrect phase adjustments and performance losses.
\end{remark}

\subsection{Performance Measures}
At each time $t$, the information available at the agent is $I_t = (S_{1:t}, A_{1:t-1})$. Let $\mathcal{I}_t$ and $\mathcal{S}$ denote the set of all information leading up to time $t$ and the system state space, respectively. A \textit{decision rule} is a function that maps the information set into the set of probability distributions on the action space, that is $\pi_t:\mathcal{I}_t\rightarrow \mathcal{P}(\mathcal{A})$. A decision rule is said to be \textit{Markovian} if it depends only on the current system state, that is $\pi_t:\mathcal{S}\rightarrow \mathcal{P}(\mathcal{A})$. Moreover, a decision rule is \textit{deterministic} if it chooses a unique action
in each state, i.e., $\pi_t:\mathcal{S}\rightarrow\mathcal{A}$. 

A \textit{policy} $\pi = \{\pi_t\}_{t=1}^\infty$ is a sequence of decision rules. We call a policy \textit{stationary} if the mapping $\pi_t$ does not depend on $t$. The taxonomy of decision rules leads to three classes of policies, ranging in generality from stationary Markovian deterministic (MD), stationary Markovian randomized (MR), to non-stationary and/or history-dependent. We denote these classes of policies as $\Pi^\textrm{MD}$, $\Pi^\textrm{MR}$, and $\Pi$, respectively. 

At each time $t$, the system incurs two costs: an actuation cost defined in \eqref{eq:CAE}, and a transmission cost given by
\begin{align}
	f(S_t, A_t) = \mathds{1}(A_t\neq 0).
\end{align}
We define the \textit{limsup\footnote{The \textit{lim average costs} need not exist for non-stationary policies\cite{puterman1994markov}.} average CAE} of a policy $\pi\in\Pi$ by
\begin{align}
	C_s(\pi) \triangleq \limsup_{T\rightarrow\infty}\frac{1}{T}\sum_{t=1}^{T}\mathbb{E}^\pi\biggl[c(S_t, A_t, S_{t+1})\bigg|S_1 = s\biggr],
\end{align}
where $\mathbb{E}^\pi$ represents the conditional expectation, given that policy $\pi$ is employed with initial state $s\in\mathcal{S}$. Similarly, the average transmission cost can be defined as
\begin{align}
	F_s(\pi) \triangleq \limsup_{T\rightarrow\infty}\frac{1}{T}\sum_{t=1}^{T}\mathbb{E}^\pi\biggl[f(S_t, A_t)\bigg|S_1 = s\biggr].
\end{align}

\subsection{System Dynamics}
Let $P:\mathcal{S}\times\mathcal{A} \rightarrow \mathcal{S}$ denote the system transition function, where $P(s^\prime|s,a) \coloneqq \mathbb{P}(S_{t+1} = s^\prime|S_t = s, A_t = a)$ is the probability that the system state becomes $s^\prime$ at time $t+1$ when the agent chooses action $a$ in state $s$ at time $t$. For a given sampling decision, each subsystem $m\in\mathcal{M}$ evolves independently. Thus, we have
\begin{align}
	P(s^\prime|s, a) = \prod_{m\in\mathcal{M}} P^m(s^\prime(m)|s(m), a),
\end{align}
where $P^m$ is the transition function of subsystem $m$, given by
\begin{align}
	&P^m(s^\prime(m)|s(m), a) \notag\\
	&=\begin{cases}
		\mathbf{Q}^m_{i,k}p_s, &\textrm{if}~a = m, s(m) =(i, j), s^\prime(m) = (k, i),\\ 
		\mathbf{Q}^m_{i,k}p_f, &\textrm{if}~a = m, s(m) =(i, j), s^\prime(m) = (k, j),\\
		\mathbf{Q}^m_{i,k}, &\textrm{if}~a = m, s(m) =(i, i), s^\prime(m) = (k, i),\\
		\mathbf{Q}^m_{i,k}, &\textrm{if}~a \neq m, s(m) = (i,j), s^\prime(m) = (k, j),\\
		\mathbf{Q}^m_{i,k}, &\textrm{if}~a \neq m, s(m) = (i,i), s^\prime(m) = (k, i),\\
		0,&\textrm{otherwise},
	\end{cases} \label{eq:Pm}
\end{align}
where $p_f = 1-p_s, i,j, k \in \mathcal{X}^m, i\neq j$. Note that Eq.~\eqref{eq:Pm} is common for both zero- and one-delay cases. 

\section{Problem Formulation and Analysis}\label{sec:problem_formulation}
\subsection{CMDP Formulation}
We aim to find a policy $\pi$ in space $\Pi$ that minimizes the average CAE while satisfying a transmission frequency constraint. We can formulate this as an optimization problem
\begin{align}
	\mathscr{P}_\textrm{CMDP}:\,\, \inf_{\pi\in\Pi} C_s(\pi), \,\,\,\textrm{s.t.}\,\, F_s(\pi) \leq F_{\max}.\label{problem:CMDP}
\end{align}
The above problem is essentially an \textit{average-cost constrained Markov Decision Process} (CMDP), which is, however, challenging to solve since it imposes a \textit{global} constraint that involves the entire decision-making process. Notably, this transmission frequency constraint can enhance system scalability. In large-scale systems, sources can be grouped into smaller clusters, each utilizing only a small portion of the spectrum resource. This approach reduces resource contention and allows the system to scale more effectively.

\begin{definition}
    (Globally feasible policy) A policy $\pi$ is called globally feasible, or simply, feasible, if it satisfies the transmission constraint for all initial states, i.e., $F_s(\pi) \leq F_{\max}$ for all initial state $s\in\mathcal{S}$. 
\end{definition}

\begin{definition} (Constrained optimal policy)
	    Any feasible policy that attains $C_s^* = \inf_{\pi\in\Pi} C_s(\pi)$ for each $s\in\mathcal{S}$ is termed a (constrained) optimal policy.
\end{definition}

\subsection{Average Costs of Stationary Policies}
There is no loss of optimality in restricting attention to stationary policies\cite[Proposition~6.2.3]{sennott1998stochastic}. Under a stationary Markovian policy $\pi$, the induced stochastic process $\{S_t\}_{t\geq1}$ is a time-homogeneous Markov chain with transition matrix $\mathbf{P}(\pi)$. The $(s, s^\prime)$-th entry of $\mathbf{P}(\pi)$ is given by
\begin{align}
	\mathbf{P}_{s,s^\prime}(\pi) = \begin{cases}
		P(s^\prime|s, \pi(s)), & \textrm{if}~\pi\in \Pi^\textrm{MD},\\
		\sum\limits_{a\in\mathcal{A}}P(s^\prime|s,a)\pi(a|s), & \textrm{if}~\pi\in \Pi^\textrm{MR}.
	\end{cases}
\end{align}
Since $\mathbf{P}(\pi)$ is stationary, the probability of the system transitioning to state $s^\prime$ at time $t$, given an initial state of $s$, can be represented as $\mathbf{P}^{t-1}_{s,s^\prime}(\pi)$.

To simplify notation, we denote the expected CAE of a policy $\pi\in\Pi^\textrm{MR}$ in state $s$ as
\begin{align}
	\bm{c}^\pi(s) = \begin{cases}
		\sum\limits_{s^\prime\in\mathcal{S}}c(s, \pi(s), s^\prime)P(s^\prime|s, \pi(s)), & \textrm{if}~\pi\in \Pi^\textrm{MD},\\
		\sum\limits_{s^\prime\in\mathcal{S}}\sum\limits_{a\in\mathcal{A}}c(s, a, s^\prime)P(s^\prime|s, a)\pi(a|s), & \textrm{if}~\pi\in \Pi^\textrm{MR}.
	\end{cases}
\end{align}
Similarly, the expected transmission cost in state $s$ is
\begin{align}
	\bm{f}^\pi(s)
	=\begin{cases}
		f(s, \pi(s)), & \textrm{if}~\pi\in \Pi^\textrm{MD},\\
		\sum\limits_{a\in\mathcal{A}}f(s, a)\pi(a|s), & \textrm{if}~\pi\in \Pi^\textrm{MR}.
	\end{cases}
\end{align}
Then, the average costs can be rewritten as
\begin{align}
	C_s(\pi) &= \sum_{s^\prime\in\mathcal{S}}\mathbf{P}_{s,s^\prime}^*(\pi) \bm{c}^\pi(s^\prime), \\
    F_s(\pi) &= \sum_{s^\prime\in\mathcal{S}}\mathbf{P}_{s,s^\prime}^*(\pi) \bm{f}^\pi(s^\prime),
\end{align}
where $\mathbf{P}^*(\pi) = \lim_{T\rightarrow\infty}\frac{1}{T}\sum_{t=1}^{T}\mathbf{P}^{t-1}(\pi)$ is the limiting matrix\footnote{Note that we replace $\limsup$ with $\lim$ since they converge to the same value for stationary policies\cite[Proposition~8.1.1]{puterman1994markov}.}. 

In average cost problems, the limiting behavior of $\mathbf{P}(\pi)$ is fundamental. When $\mathbf{P}(\pi)$ is \textit{unichain}, that is, it has a single recurrent class $\mathcal{R}$ and a set $\mathcal{T}$ of transient state, the limiting matrix $\mathbf{P}^*(\pi)$ has equal rows and satisfies\cite[Theorem~A.4]{puterman1994markov}
\begin{align}
	\mathbf{P}^*_{s,s^\prime}(\pi) = \begin{cases}
		\bm{\nu}(s^\prime), & s^\prime\in\mathcal{R},\\
		0, & s^\prime\in\mathcal{T},
	\end{cases}
\end{align}
where $\bm{\nu}$ is the stationary distribution of $\mathcal{R}$. Consequently, the average CAE and the average transmission cost are \textit{independent of initial states}, i.e., $C(\pi) = C_s(\pi), F(\pi) = F_s(\pi)$ for all $s\in\mathcal{S}$. When $\mathcal{T} = \emptyset$, then $\mathbf{P}(\pi)$ is \textit{recurrent}, and the average costs satisfy $C(\pi) = \bm{\nu}^\textrm{T}\bm{c}^\pi$ and $F(\pi) = \bm{\nu}^\textrm{T}\bm{f}^\pi$.

\subsection{Source-Agnostic Policy}
Before discussing the constrained optimal policy, we first examine a special subclass of stationary randomized policies, termed \textit{source-agnostic} (SA) policy. At each time $t$, the SA policy takes probabilistic actions independent of source states. Specifically, it selects each source $m$ with a fixed probability $f_m$, where $\sum_{m \in \mathcal{M}} f_m \leq F_{\max}$. The transition matrix of each subsystem $m$ of an SA policy $\pi_\textrm{sa}$ is given by
\begin{align}
	&\mathbf{P}^m_{s(m),s^\prime(m)}(\pi_\textrm{sa})= \notag\\
	&\begin{cases}
		\mathbf{Q}^m_{i,k} f_m p_s, &\textrm{if}~s(m)= (i,j), s^\prime(m) =  (k, i), \\
		\mathbf{Q}^m_{i,k} (1 - f_m p_s), &\textrm{if}~s(m) = (i,j),s^\prime(m) =  (k, j), \\
		\mathbf{Q}^m_{i,k}, &\textrm{if}~s(m) = (i,i),s^\prime(m) =  (k, i), \\
		0, &\textrm{otherwise},
	\end{cases}
\end{align}
where $i,j,k\in\mathcal{X}^m, i\neq j$. The system's transition matrix is 
\begin{align}
	\mathbf{P}_{s,s^\prime}(\pi_\textrm{sa}) = \prod_{m\in\mathcal{M}}\mathbf{P}^m_{s(m),s^\prime(m)}(\pi_\textrm{sa}).
\end{align}

The following result shows that the Markov chain induced by an SA policy is recurrent and aperiodic. This result is useful in proving the theorems in Section~\ref{sec:main_results} and Section~\ref{sec:lyapunov}.

\begin{proposition}\label{proposition:SA_policy} 
Suppose $\mathbf{Q}^m$ is irreducible and $0<f_m<F_{\max}$. Then $\mathbf{P}^m(\pi_\textrm{sa})$ forms a recurrent and aperiodic chain. 
\end{proposition}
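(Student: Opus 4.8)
The plan is to show that the single-subsystem chain $\mathbf{P}^m(\pi_\textrm{sa})$ is irreducible on the state space $\mathcal{X}^m \times \mathcal{X}^m$ (restricted appropriately) and that it has a self-loop at some state, which together with irreducibility gives aperiodicity. First I would fix the subsystem index $m$ and drop it from the notation. The state is a pair $(i,j)$ with $i = X^m$ (the true source value) and $j = \hat{X}^m$ (the estimate); note $j$ is always a value the source has taken at some past time, so the reachable state space is contained in $\mathcal{X}^m \times \mathcal{X}^m$ (with the convention that diagonal states $(i,i)$ behave as in the third case of the transition law). I would argue reachability in two stages: (i) from any state $(i,j)$ one can reach a diagonal state $(k,k)$, and (ii) from a diagonal state one can reach any target state $(i',j')$.

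For stage (i): with probability $f_m p_s > 0$ a transmission succeeds, so from $(i,j)$ the chain moves to $(k,i)$ for any $k$ with $\mathbf{Q}_{i,k} > 0$; since $\mathbf{Q}$ is irreducible, iterating the ``successful transmission'' branch lets the first coordinate walk anywhere, and in particular there is a positive-probability path whose first coordinate lands on the current second coordinate, i.e.\ we reach a diagonal state $(i,i)$. For stage (ii): from a diagonal state $(\ell,\ell)$, one more step (any branch) gives $(k,\ell)$ with $\mathbf{Q}_{\ell,k}>0$; then repeatedly using the successful-transmission branch we can steer the estimate to freeze at any desired value $j'$ — concretely, walk the source (via $\mathbf{Q}$-irreducibility) to $j'$, transmit successfully to set the estimate to $j'$, then walk the source to $i'$ using the \emph{unsuccessful} branch (probability $\mathbf{Q}_{i,k}(1-f_m p_s) > 0$, which leaves the estimate untouched), arriving at $(i',j')$. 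Each step has strictly positive probability because $0 < f_m p_s < 1$ (using $f_m < F_{\max} \le 1$ and $p_s \le 1$, together with $f_m > 0$ and $p_s > 0$), so the composite path has positive probability. This establishes irreducibility of $\mathbf{P}^m(\pi_\textrm{sa})$ on its reachable state space, hence recurrence (the chain is finite).

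For aperiodicity it suffices to exhibit one state with a self-loop. Take a diagonal state $(i,i)$ with $\mathbf{Q}_{i,i} > 0$ — such a state exists because $\mathbf{Q}$ is a stochastic matrix with positive diagonal entries whenever it is slowly evolving, but in the general irreducible case I would instead pick a non-diagonal state $(i,j)$, $i \neq j$, with $\mathbf{Q}_{i,i} > 0$: if no diagonal entry of $\mathbf{Q}$ is positive one uses that an irreducible finite chain has a cycle and takes a state on a shortest cycle, giving a return path of length equal to the period $d$ of $\mathbf{Q}$; combined with the ``unsuccessful transmission'' branch from $(i,j)$, which stays at estimate $j$ and moves the source along $\mathbf{Q}$, the period of $\mathbf{P}^m(\pi_\textrm{sa})$ divides $d$, and by also mixing in a successful transmission one can produce return paths of length $d+1$, forcing the period to be $1$. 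The cleanest route, and the one I would actually write, is: if $\mathbf{Q}_{i,i}>0$ for some $i$, then $(i,i)$ is a self-loop of $\mathbf{P}^m(\pi_\textrm{sa})$ (third branch of the transition law with $k=i$), done; otherwise reduce to the period-of-$\mathbf{Q}$ argument just sketched. The main obstacle is this last case distinction — handling aperiodicity when $\mathbf{Q}$ itself has no self-loop — since for a generic irreducible $\mathbf{Q}$ one must genuinely use the randomness of the channel (the coexistence of the success and failure branches) to break periodicity, rather than relying on a diagonal self-loop; everything else is a routine positive-probability path construction.
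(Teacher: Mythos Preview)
Your irreducibility argument follows the same positive-probability path-construction strategy as the paper's, only more elaborately; the paper simply exhibits one unsuccessful step $(i,j)\to(k,j)$ followed by one successful step $(k,j)\to(l,k)$ and leaves the rest implicit. (A small wrinkle: your stage~(i), as written, uses only the successful branch, under which the second coordinate is the one-step-lagged first coordinate, so reaching a diagonal state that way already needs $\mathbf{Q}^m_{i,i}>0$ for some $i$. This is easily repaired by freezing the estimate with the unsuccessful branch and walking the source back to it.)

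For aperiodicity there is a genuine gap. The paper's argument is a single line: it asserts $\mathbf{P}^m_{s,s}(\pi_{\textrm{sa}})>0$ for every $s$, which from the transition law amounts to $\mathbf{Q}^m_{i,i}>0$ for every $i$---an implicit standing assumption that the paper also uses elsewhere (e.g.\ in the proof of Proposition~\ref{proposition:convergence_of_rvi}). You correctly notice that this hypothesis is not stated in the proposition, and then try to salvage aperiodicity when $\mathbf{Q}^m$ has no self-loop by producing return paths of lengths $d$ and $d{+}1$. That rescue cannot work: the first coordinate of $\mathbf{P}^m(\pi_{\textrm{sa}})$ is, marginally, exactly the $\mathbf{Q}^m$-chain (sum the branches over the second coordinate and the $f_m p_s$ and $1-f_m p_s$ factors collapse), so every closed walk in $\mathbf{P}^m(\pi_{\textrm{sa}})$ has length divisible by the period of $\mathbf{Q}^m$. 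No length-$(d{+}1)$ cycle exists. Concretely, if $\mathbf{Q}^m$ is the deterministic $2$-cycle on $\{0,1\}$ then $\mathbf{P}^m(\pi_{\textrm{sa}})$ has period $2$ and the conclusion of the proposition is simply false. The correct fix is not a cleverer argument but to add the missing hypothesis (aperiodicity of $\mathbf{Q}^m$, e.g.\ a positive diagonal), after which the paper's one-line self-loop observation already does the job.
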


\begin{proof}
    Since the self-transition probability of each state in $\mathcal{S}^m$ is positive, i.e., $\mathbf{P}^m_{s, s}(\pi_\textrm{sa})>0, \forall s \in \mathcal{S}^m$, then $\mathbf{P}^m(\pi_\textrm{sa})$ is aperiodic. To prove the recurrence of $\mathbf{P}^m(\pi_\textrm{sa})$, we need to show that every state $s\in\mathcal{S}^m$ communicates with every other state $s^\prime\in\mathcal{S}^m\backslash\{s\}$\cite[Chapter~4]{gallager1997discrete}. Suppose the chain is at state $s = (i,j)$, it can change to state $s^\prime = (k, j)$ with a positive probability $\mathbf{Q}^m_{i,k}(1-f_mp_s)$. Then, the chain can access state $s^{\prime\prime} = (l,k)$ with a probability of $\mathbf{Q}^m_{k,l}f_mp_s$. So, $s^{\prime\prime}$ is accessible from $s$. Likewise, we can show that $s$ is accessible from $s^{\prime\prime}$. Hence, $s$ communicates with $s^{\prime\prime}$. This holds for any $s, s^{\prime\prime} \in \mathcal{S}^m$, which completes the proof.
\end{proof}

\begin{corollary}\label{corollary:recurrent}
	It follows from Proposition~\ref{proposition:SA_policy} that $\mathbf{P}(\pi_\textrm{sa})$ is recurrent and aperiodic. 
\end{corollary}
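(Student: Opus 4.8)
The plan is to lift the single-subsystem result of Proposition~\ref{proposition:SA_policy} to the product chain. Recall that under an SA policy the subsystems evolve independently, so the system transition matrix factorizes as $\mathbf{P}_{s,s^\prime}(\pi_\textrm{sa}) = \prod_{m\in\mathcal{M}}\mathbf{P}^m_{s(m),s^\prime(m)}(\pi_\textrm{sa})$. Since $\mathcal{S} = \prod_{m\in\mathcal{M}}\mathcal{S}^m$ is finite, it suffices to show the product chain is irreducible and aperiodic; recurrence of all states then follows automatically because every state in a finite irreducible chain is positive recurrent.

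First I would establish aperiodicity: by Proposition~\ref{proposition:SA_policy} each factor has a strictly positive self-loop, $\mathbf{P}^m_{s(m),s(m)}(\pi_\textrm{sa}) > 0$ for every $s(m)\in\mathcal{S}^m$, hence $\mathbf{P}_{s,s}(\pi_\textrm{sa}) = \prod_{m}\mathbf{P}^m_{s(m),s(m)}(\pi_\textrm{sa}) > 0$ for every $s\in\mathcal{S}$, which immediately gives period one. For irreducibility I would argue that any target state $s^\prime$ is reachable from any $s$ by routing the subsystems one at a time: by the proof of Proposition~\ref{proposition:SA_policy}, for each $m$ there is a finite sequence of transitions of positive probability in $\mathbf{P}^m(\pi_\textrm{sa})$ taking $s(m)$ to $s^\prime(m)$; while we drive subsystem $m$ along its path, every other subsystem $m^\prime$ can hold its current state using its positive self-loop (or, once it has reached its target, keep holding there). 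Concatenating these $M$ phases yields a path of positive probability from $s$ to $s^\prime$ in $\mathbf{P}(\pi_\textrm{sa})$. Hence the product chain is irreducible, and being finite and irreducible it is recurrent (indeed positive recurrent).

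The only mild subtlety — and the step I would be most careful about — is the bookkeeping of the "hold the others in place" argument, since the single-subsystem path in Proposition~\ref{proposition:SA_policy} passes through intermediate states of the form $(k,j)$ and $(l,k)$, and one must check that the self-loop used for parking is the $\mathbf{Q}^m_{i,k}(1-f_m p_s)$ (or $\mathbf{Q}^m_{i,i}$) branch with $k=i$, which is positive precisely because $0<f_m<F_{\max}\le 1$ and $\mathbf{Q}^m$ is irreducible (so $\mathbf{Q}^m_{i,i}$ may be zero, but the self-loop in $\mathcal{S}^m$ still has positive probability via staying at $(i,j)$ through some $\mathbf{Q}^m$-transition; here it is cleanest to invoke directly the already-proved fact $\mathbf{P}^m_{s,s}(\pi_\textrm{sa})>0$ rather than re-derive it). Everything else is a routine finite-state Markov chain argument, so the corollary follows without real difficulty.
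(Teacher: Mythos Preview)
Your proposal is correct. The paper itself gives no proof for this corollary at all---it simply states the result as an immediate consequence of Proposition~\ref{proposition:SA_policy}---so there is nothing to compare at the level of technique; you have supplied the standard product-chain argument (factorized self-loops for aperiodicity, coordinate-by-coordinate routing using those self-loops for irreducibility, and recurrence from finiteness) that the paper leaves implicit. Your ``mild subtlety'' paragraph is a bit muddled, since you already invoked $\mathbf{P}^m_{s,s}(\pi_\textrm{sa})>0$ in the aperiodicity step and can reuse it verbatim for the holding argument, but this is cosmetic rather than a gap.
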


\section{Main Results}\label{sec:main_results}
In this section, we show the existence and structure of a constrained optimal policy. We first relax the original CMDP as an unconstrained Lagrangian MDP with a multiplier $\lambda\geq0$, termed $\mathscr{P}_\textrm{L-MDP}^\lambda$. Theorem~\ref{theorem:lambda_optimal_policy} shows that the Lagrangian MDP for any given $\lambda\geq0$ is \textit{communicating} and the corresponding $\lambda$-optimal policy is \textit{stationary deterministic}. We use \textit{relative value iteration} to solve $\mathscr{P}_\textrm{L-MDP}^\lambda$ and show its convergence in Proposition~\ref{proposition:convergence_of_rvi}. Proposition~\ref{proposition:piecewise_continuity} shows some important properties of the cost functions, based on which we establish the structure of an optimal policy in Theorem~\ref{theorem:structure_of_optimal_policy}.

\subsection{The Lagrangian MDP}\label{sec:lagrangian-mdp}
Now we derive the structure of an optimal policy based on the Lagrangian technique\cite{altman1999constrained}. We start with rewriting $\mathscr{P}_\textrm{CMDP}$ in its  Lagrangian form. The \textit{Lagrangian average cost} of a policy $\pi$ with multiplier $\lambda\geq 0$ is defined as
\begin{align}
	\mathcal{L}_s^\lambda(\pi)
	&\triangleq \limsup_{T\rightarrow\infty}\frac{1}{T}\sum_{t=1}^{T}\mathbb{E}^\pi \big[l^\lambda(S_t, A_t, S_{t+1})|S_1 = s\big]\notag\\
	&= C_s(\pi) + \lambda F_s(\pi),
\end{align}
where the \textit{Lagrangian instantaneous cost}, $l^\lambda(s, a, s^\prime)$, is
\begin{align}
	l^\lambda(s, a, s^\prime) = c(s, a, s^\prime) + \lambda f(s, a).
\end{align}
Given any $\lambda\geq0$, define the unconstrained \textit{Lagrangian MDP}
\begin{align}
	\mathscr{P}_\textrm{L-MDP}^\lambda:\,\, \inf_{\pi\in \Pi}\mathcal{L}_s^\lambda(\pi).
\end{align}
The above problem is a standard MDP with an average cost criterion. We note that the case $\lambda = 0$ corresponds to the trivial case where the constraint in \eqref{problem:CMDP} is not active. 

The Lagrangian approach thereafter solves the following two-layer unconstrained ``sup-inf" problem\cite{altman1999constrained}:
\begin{align}
    \sup_{\lambda\geq 0}\,\inf_{\pi\in \Pi}\,\mathcal{L}_s^\lambda(\pi) - \lambda F_{\max}.\label{problem:lagrangian}
\end{align}
The inner layer solves a standard MDP with a fixed multiplier, while the outer layer searches for the largest multiplier that produces a feasible policy. In the rest of the paper, we shall refer to $\mathscr{P}^\lambda_\textrm{L-MDP}$ as the inner problem since $\lambda F_{\max}$ is constant.

In what follows we show that for any $\lambda\geq0$ and $F_{\max}\in(0, 1]$, there exists an unconstrained optimal policy for the inner problem, that is stationary and Markovian deterministic. We show in Section~\ref{sec:structure} the structural results of Lagrangian MDPs, based on which the supremum of the outer problem can be easily achieved. It is known that the \textit{duality gap}, defined as the difference between the optimal values obtained in problems \eqref{problem:CMDP} and \eqref{problem:lagrangian}, is generally non-zero. An important consequence of Theorem~\ref{theorem:structure_of_optimal_policy} is that our approach leads to a constrained optimal policy, i.e., achieving zero duality gap.
 
\begin{definition} ($\lambda$-optimal policy)
    A policy $\pi$ is called $\lambda$-optimal if it solves $\mathscr{P}^\lambda_\textrm{L-MDP}$ for a given multiplier $\lambda$ and the average costs $\mathcal{L}^\lambda, C^\lambda$, and $F^\lambda$ are independent of initial states. 
\end{definition}

\begin{theorem}\label{theorem:lambda_optimal_policy}
For any given $\lambda\geq 0$, there exists a vector $\bm{h}^\lambda$ that satisfies the following Bellman's optimality equation:
\begin{align}
    \mathcal{L}^\lambda + \bm{h}^\lambda(s) = \min_{a\in \mathcal{A}}\sum_{s^\prime \in \mathcal{S}}P(s^\prime|s,a)\big(l^\lambda(s,a, s^\prime)+\bm{h}^\lambda(s^\prime)\big),\label{problem:bellman}
\end{align}
for all $s\in\mathcal{S}$. The quantities $\mathcal{L}^\lambda, F^\lambda$, and $C^\lambda$ are independent of initial states. Furthermore, the $\lambda$-optimal policy belongs to $\Pi^\textrm{MD}$ and can be obtained by
\begin{align}
    \pi^*_\lambda(s)&=\argmin_{a\in \mathcal{A}}\sum_{s^\prime \in \mathcal{S}}P(s^\prime|s,a)\big(l^\lambda(s,a, s^\prime)+\bm{h}^\lambda(s^\prime)\big).\label{eq:bellman_pi}
\end{align}
\end{theorem}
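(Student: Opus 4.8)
The plan is to invoke the standard existence theory for average-cost MDPs with finite state and action spaces, for which the cleanest sufficient condition is that the MDP is \emph{communicating} (or weakly communicating). So the argument has three stages: (i) show the Lagrangian MDP $\mathscr{P}_\textrm{L-MDP}^\lambda$ is communicating; (ii) conclude from a known theorem (e.g.\ \cite[Theorem~8.3.2 and Theorem~8.4.3]{puterman1994markov}) that there exist a scalar $\mathcal{L}^\lambda$ and a bias vector $\bm{h}^\lambda$ solving the optimality equation \eqref{problem:bellman}, with $\mathcal{L}^\lambda$ independent of the initial state; (iii) read off the stationary deterministic $\lambda$-optimal policy \eqref{eq:bellman_pi} from the $\argmin$, and argue that $C^\lambda$ and $F^\lambda$ are individually initial-state independent.

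For step (i), I would show that under any deterministic stationary policy every state is reachable from every other state, which in fact gives the stronger ``communicating'' property directly. The key observation is that the agent always has the action $a=m$ available for each source $m$, and by Proposition~\ref{proposition:SA_policy} (and Corollary~\ref{corollary:recurrent}) the source-agnostic policy already induces a recurrent aperiodic chain on $\mathcal{S}$; since communication of states is a property of the \emph{reachability graph} obtained by taking the union of transitions over all actions, and that graph contains the (strongly connected) transition graph of $\pi_\textrm{sa}$, the whole MDP is communicating. Concretely, given $s=(i,j)$ and target $s'=(k,l)$ componentwise: because $\mathbf{Q}^m$ is irreducible we can drive each true state $X^m$ from $i_m$ to $k_m$ along a positive-probability path, and by choosing to sample source $m$ (action $a=m$) at an appropriate time with a successful delivery we can set $\hat{X}^m$ to any desired value including $l_m$; doing this source by source, and using $a=0$ to ``wait'' on the already-correct components, produces a finite positive-probability trajectory from $s$ to $s'$. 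Hence $\mathscr{P}_\textrm{L-MDP}^\lambda$ is communicating.

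For steps (ii)–(iii), since the state and action spaces are finite and the MDP is communicating, the average-cost optimality equation \eqref{problem:bellman} admits a solution $(\mathcal{L}^\lambda,\bm{h}^\lambda)$ with $\mathcal{L}^\lambda$ a constant (independent of $s$), and any policy selecting a minimizing action in \eqref{eq:bellman_pi} is average-optimal within $\Pi$ and attains the constant optimal cost $\mathcal{L}^\lambda$ from every initial state; in particular an optimal stationary deterministic policy exists, so the $\lambda$-optimal policy may be taken in $\Pi^\textrm{MD}$. Finally, to see that $C^\lambda$ and $F^\lambda$ (not just their combination $\mathcal{L}^\lambda = C^\lambda + \lambda F^\lambda$) are initial-state independent: the chain $\mathbf{P}(\pi^*_\lambda)$ induced by this deterministic policy on the finite space $\mathcal{S}$ is unichain---indeed, by the communicating property together with optimality it has a single recurrent class---so as recalled in Section~\ref{sec:problem_formulation} its limiting matrix has identical rows, making both $C_s(\pi^*_\lambda)=\bm\nu^\textrm{T}\bm c^{\pi^*_\lambda}$ and $F_s(\pi^*_\lambda)=\bm\nu^\textrm{T}\bm f^{\pi^*_\lambda}$ independent of $s$.

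The main obstacle I anticipate is step (i): one must be slightly careful that ``communicating'' is about reachability under \emph{some} policy, not under every policy (a pure-waiting policy $\pi\equiv 0$ is clearly not irreducible), and that the feedback/delay structure in \eqref{eq:Pm} does not create an unreachable corner of $\mathcal{S}$---for instance states of the form $(i,i)$ versus $(i,j)$ behave differently in \eqref{eq:Pm}, and one should check the reachability argument handles the ``already synchronized'' components correctly in both the zero- and one-delay state encodings. Everything after that is an appeal to the textbook average-cost theory for finite communicating MDPs.
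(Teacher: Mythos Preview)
Your proposal is correct and follows essentially the same approach as the paper: both arguments establish that the MDP is (weakly) communicating by appealing to the source-agnostic policy of Proposition~\ref{proposition:SA_policy}/Corollary~\ref{corollary:recurrent}, and then invoke the standard average-cost theory (Puterman, Theorem~8.3.2). The paper's version is terser---it cites \cite[Proposition~8.3.1]{puterman1994markov} for the implication ``$\exists$ randomized stationary policy with recurrent chain $\Rightarrow$ communicating'' rather than spelling out the reachability argument---but the logical content is the same; your additional remark on why $C^\lambda$ and $F^\lambda$ are individually initial-state independent (via the unichain structure of $\mathbf{P}(\pi_\lambda^*)$) is a useful elaboration that the paper leaves implicit.
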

\begin{proof}
    According to \cite[Theorem~8.3.2]{puterman1994markov} and \cite[Chapter~4.2]{bertsekas2011dynamic}, it is sufficient to show that the transition kernel $P$ is (weakly) communicating. By \cite[Proposition~8.3.1]{puterman1994markov}, this holds if there exists a randomized stationary policy that induces a recurrent chain. Clearly, the SA policy satisfies this condition (see Proposition~\ref{proposition:SA_policy}), which completes the proof. 
\end{proof}

We can use the \textit{relative value iteration} (RVI) \cite[Section~8.5.5]{puterman1994markov} to solve $\mathscr{P}_\textrm{L-MDP}^\lambda$ for any given $\lambda\geq 0$, as shown in Algorithm~\ref{alg:RVI_LMDP}. Let $s_\textrm{ref}\in\mathcal{S}$ denote an arbitrarily chosen reference state, $\bm{v}^k$ and $\tilde{\bm{v}}^k$ denote the value function and the relative value function for iteration $k$, respectively. In the beginning, we set $\bm{v}^{0}= \tilde{\bm{v}}^{0}= \bm{0}$. Then, for each $k\geq 1$, the RVI updates the value functions as follows:
\begin{align}
	\bm{v}^{k+1}(s) &=  \min_{a\in \mathcal{A}}\sum_{s^\prime \in \mathcal{S}}P(s^\prime|s,a)\big(l^\lambda(s,a, s^\prime)+\tilde{\bm{v}}^{k}(s^\prime)\big), \label{eq:RVI-a}\\
	\tilde{\bm{v}}^{k+1}(s) &= \bm{v}^{k+1}(s) - \bm{v}^{k+1}(s_{\textrm{ref}}).\label{eq:RVI-b}
\end{align}

As stated in Proposition~\ref{proposition:convergence_of_rvi} below, the sequences $\{\bm{v}^k\}_{k\geq 0}$ and $\{\tilde{\bm{v}}^k\}_{k\geq 0}$ converge to $\bm{v}^*$ and $\bm{h}^\lambda$, respectively. Then, the $\lambda$-optimal policy can be obtained by solving \eqref{eq:bellman_pi}, and the average costs are obtained as
\begin{align}
	C^\lambda &= \sum_{s\in\mathcal{S}}\bm{\nu}^{\pi_\lambda^*}(s)\bm{c}^{\pi_\lambda^*}(s),\label{eq:average-cost-C}\\
	F^\lambda &= \sum_{s\in\mathcal{S}}\bm{\nu}^{\pi_\lambda^*}(s)\bm{f}^{\pi_\lambda^*}(s), \label{eq:average-cost-F}\\
    \mathcal{L}^\lambda &=  C^\lambda + \lambda F^\lambda, \label{eq:average-cost-L}
\end{align}
where $\bm{\nu}^{\pi_\lambda^*}$ is the stationary distribution of $\mathbf{P}(\pi_\lambda^*)$, and $\bm{\nu}^{\pi_\lambda^*}(s)$ can be interpreted as the fraction of time the system is in $s$.

\begin{proposition}\label{proposition:convergence_of_rvi}
For any $\lambda\geq 0$, the sequences $\{\bm{v}^k\}_{k\geq 0}$ generated by Eqs.~\eqref{eq:RVI-a}-\eqref{eq:RVI-b} convergences, i.e., $\lim_{k\rightarrow\infty} \bm{v}^k \rightarrow \bm{v}^*$. Moreover, $\mathcal{L}^\lambda = \bm{v}^*(s_\textrm{ref})$ and $\bm{h}^\lambda= \bm{v}^* - \bm{v}^*(s_\textrm{ref})$ is a solution to \eqref{problem:bellman}.
\end{proposition}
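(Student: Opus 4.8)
The plan is to invoke the standard convergence theory of the relative value iteration for average-cost MDPs, after verifying the hypotheses that make that theory apply. The cleanest route is through \cite[Theorem~8.5.2 and Section~8.5.5]{puterman1994markov}, which establishes that RVI converges whenever the underlying transition structure is communicating (or weakly communicating) and the optimality equation \eqref{problem:bellman} admits a solution; Theorem~\ref{theorem:lambda_optimal_policy} has already supplied both of these — the kernel $P$ is communicating because the SA policy of Proposition~\ref{proposition:SA_policy} induces a recurrent chain, and the Bellman equation has a solution $\bm{h}^\lambda$ with constant gain $\mathcal{L}^\lambda$. So the heart of the argument is to connect the $\bm{v}^k$ iterates to a contraction or averaging argument that forces convergence, then to identify the limit.

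First I would observe that the undiscounted value-iteration map $T$ defined by the right-hand side of \eqref{eq:RVI-a}, $(T\bm{v})(s) = \min_{a}\sum_{s'}P(s'|s,a)(l^\lambda(s,a,s') + \bm{v}(s'))$, is monotone and satisfies $T(\bm{v}+c\bm{1}) = T\bm{v} + c\bm{1}$ for any constant $c$; hence the relative iteration \eqref{eq:RVI-b} is exactly $\tilde{\bm{v}}^{k+1} = T\tilde{\bm{v}}^k - (T\tilde{\bm{v}}^k)(s_\textrm{ref})\bm{1}$, i.e., iteration of the normalized map $\tilde T\bm{v} = T\bm{v} - (T\bm{v})(s_\textrm{ref})\bm{1}$. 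Second, I would note that the finite-state, finite-action setting together with the communicating property yields a unichain (indeed, via the SA policy, a recurrent) structure under every optimal stationary policy, which is precisely the condition under which $\tilde T$ is a contraction in the span seminorm, or, where periodicity could otherwise obstruct this, the aperiodicity established in Proposition~\ref{proposition:SA_policy} lets one either appeal directly to \cite[Theorem~8.5.3]{puterman1994markov} or pass to an aperiodic data-transformed chain. Third, from span-contraction one gets that $\tilde{\bm{v}}^k$ converges to the unique (up to the normalization at $s_\textrm{ref}$) fixed point of $\tilde T$; call it $\bm{h}$. Taking limits in \eqref{eq:RVI-a}–\eqref{eq:RVI-b} then shows $\bm{v}^{k+1}(s_\textrm{ref}) \to$ some constant $g$ with $g\bm{1} + \bm{h} = T\bm{h}$, so $(g,\bm{h})$ solves \eqref{problem:bellman}; by uniqueness of the gain in a communicating MDP, $g = \mathcal{L}^\lambda$. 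Finally $\bm{v}^k = \tilde{\bm{v}}^k + \bm{v}^k(s_\textrm{ref})\bm{1}$ and $\bm{v}^k(s_\textrm{ref})\to g$ together give $\bm{v}^k \to \bm{v}^* := \bm{h} + \mathcal{L}^\lambda\bm{1}$, whence $\bm{v}^*(s_\textrm{ref}) = \bm{h}(s_\textrm{ref}) + \mathcal{L}^\lambda = \mathcal{L}^\lambda$ (choosing the normalization $\bm{h}(s_\textrm{ref})=0$), and $\bm{h}^\lambda = \bm{v}^* - \bm{v}^*(s_\textrm{ref})\bm{1}$ solves the Bellman equation, as claimed.

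I expect the main obstacle to be the periodicity issue: plain undiscounted value iteration need not converge for periodic chains, and the communicating property alone does not rule out periodicity of the optimal policy's chain. The remedy — and the place where Proposition~\ref{proposition:SA_policy}'s aperiodicity claim earns its keep — is to use the standard Schweitzer data transformation $\bm{v} \mapsto (1-\tau)\bm{v} + \tau T\bm{v}$ (equivalently, adding a self-loop of probability $\tau\in(0,1)$ to every state), which leaves the average-cost optimality equation and its solutions unchanged but makes every stationary chain aperiodic, so that the span-contraction/convergence theorem applies verbatim; since the self-transition probabilities here are already strictly positive for the relevant chains, one can often dispense with the transformation entirely. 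The remaining steps are routine once convergence of $\tilde{\bm{v}}^k$ is in hand, since uniqueness of the average-cost gain in a communicating MDP and the algebra relating $\bm{v}^k$, $\tilde{\bm{v}}^k$, and $\bm{v}^k(s_\textrm{ref})$ are standard.
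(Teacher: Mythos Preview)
Your proposal is correct and follows essentially the same approach as the paper: both verify the hypotheses of a standard RVI convergence theorem by invoking the communicating structure established in Theorem~\ref{theorem:lambda_optimal_policy} (via the SA policy) together with aperiodicity from the strictly positive self-transition probabilities $P(s|s,a)>0$. The paper's proof is terser---it cites \cite[Proposition~4.3.2]{bertsekas2011dynamic} and \cite[Proposition~6.6.3]{sennott1998stochastic} directly rather than unpacking the span-contraction mechanism---but the substance is identical, and your explicit identification of the limit via the normalized fixed point is a useful elaboration.
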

\begin{proof}
    A sufficient condition for the convergence of the RVI is the recurrence and aperiodicity of the Markov chain induced by optimal stationary policies\cite[Proposition~4.3.2]{bertsekas2011dynamic},\cite[Proposition~6.6.3]{sennott1998stochastic}. The recurrence of the optimal policy follows directly from Theorem~\ref{theorem:lambda_optimal_policy}. Moreover, the self-transition probabilities satisfy $P(s|s,a)>0$ for any $s\in\mathcal{S}$ and $a\in\mathcal{A}$. Therefore, aperiodicity can always be achieved.
\end{proof}

\begin{algorithm}[ht]
    \renewcommand{\thealgocf}{1}
    \DontPrintSemicolon
    \setstretch{0.8}
    \SetAlgoLined
	\KwIn{$\lambda$, $c$, $f$, $P$, $\epsilon$}
	\KwOut{$\pi_\lambda^*, F^\lambda, C^\lambda$, $\mathcal{L}^\lambda$}
	\BlankLine
	$s_\textrm{ref}=0, \bm{v}^0 = \tilde{\bm{v}}^0=\bm{0}$ for all $s\in\mathcal{S}$\;
	\For{$k = 1, 2, \ldots$}{ 
		\For{each $s\in \mathcal{S}$}{   
		  \textsc{Update} $\bm{v}^k, \tilde{\bm{v}}^k$ using~\eqref{eq:RVI-a}-\eqref{eq:RVI-b}.\;
		}
		\If{$\|\bm{v}^{k}-\bm{v}^{k-1}\|_{\infty} < \epsilon$}{
            \For{each $s\in \mathcal{S}$}{    
                \textsc{Compute} $\pi_\lambda^*$ using~\eqref{eq:bellman_pi} with $\bm{h}^\lambda = \tilde{\bm{v}}^k$.\;
            }
            \textsc{Compute} average costs using~\eqref{eq:average-cost-C}-\eqref{eq:average-cost-L}.\;
            \Return{$\pi_\lambda^*, F^\lambda, C^\lambda$, $\mathcal{L}^\lambda$}
		}
	}
	\caption{RVI for solving $\mathscr{P}_\textrm{L-MDP}^\lambda$.}
 \label{alg:RVI_LMDP}
\end{algorithm}

\subsection{Structure of the Optimal Policy}\label{sec:structure}
Until now, we have shown the existence of an optimal deterministic policy for the inner problem. In the following, we show that there exists a constrained optimal policy that is in no case more complex than a randomized mixture of two deterministic policies. Before stating that, we first present some important properties of the cost functions $\mathcal{L}^\lambda$, $C^\lambda$, and $F^\lambda$. These properties provide the basis for solving the outer problem in \eqref{problem:lagrangian}.

\begin{proposition} \label{proposition:piecewise_continuity}
$\mathcal{L}^\lambda$ is a piecewise linear, concave, and monotonically increasing function of $\lambda$. $F^\lambda (C^\lambda)$ is piece-wise constant, monotonically non-increasing (non-decreasing).
\end{proposition}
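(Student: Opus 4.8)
The plan is to exploit the finite structure of the Lagrangian MDP together with a standard perturbation argument on Bellman's optimality equation. First I would observe that, by Theorem~\ref{theorem:lambda_optimal_policy}, for each $\lambda \geq 0$ the value $\mathcal{L}^\lambda$ equals the average cost of some stationary deterministic policy, and there are only finitely many such policies (since $\mathcal{S}$ and $\mathcal{A}$ are finite). For a fixed deterministic policy $\pi \in \Pi^\textrm{MD}$ that induces a unichain (which holds, e.g., by the recurrence arguments of Section~\ref{sec:system_model}), its average Lagrangian cost is $\mathcal{L}^\lambda(\pi) = C(\pi) + \lambda F(\pi)$, an affine function of $\lambda$ with slope $F(\pi) \in [0,1]$ and intercept $C(\pi) \geq 0$. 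Since $\mathcal{L}^\lambda = \min_{\pi \in \Pi^\textrm{MD}} \mathcal{L}^\lambda(\pi)$ is the pointwise minimum of finitely many affine functions, it is immediately piecewise linear and concave; monotonic nondecreasingness follows because each $\mathcal{L}^\lambda(\pi)$ is nondecreasing in $\lambda$ (as $F(\pi) \geq 0$), and a pointwise minimum of nondecreasing functions is nondecreasing.

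Next I would establish the claims about $F^\lambda$ and $C^\lambda$. The key point is that the breakpoints of $\mathcal{L}^\lambda$ partition $[0,\infty)$ into finitely many intervals, and on the interior of each such interval the minimizing affine piece is unique, so a single deterministic policy $\pi$ (up to policies with identical $(C,F)$ pair) is $\lambda$-optimal throughout that interval. Consequently $F^\lambda = F(\pi)$ and $C^\lambda = C(\pi)$ are constant on each such interval, i.e., piecewise constant. For monotonicity: take $\lambda_1 < \lambda_2$ with respective $\lambda$-optimal deterministic policies $\pi_1, \pi_2$. Optimality gives $C(\pi_1) + \lambda_1 F(\pi_1) \leq C(\pi_2) + \lambda_1 F(\pi_2)$ and $C(\pi_2) + \lambda_2 F(\pi_2) \leq C(\pi_1) + \lambda_2 F(\pi_1)$. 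Adding these two inequalities and simplifying yields $(\lambda_2 - \lambda_1)(F(\pi_1) - F(\pi_2)) \geq 0$, hence $F(\pi_1) \geq F(\pi_2)$, i.e., $F^\lambda$ is nonincreasing. Plugging $F(\pi_1) \geq F(\pi_2)$ back into the first inequality gives $C(\pi_1) - C(\pi_2) \leq \lambda_1(F(\pi_2) - F(\pi_1)) \leq 0$, so $C^\lambda$ is nondecreasing.

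A subtlety I would need to handle carefully is well-definedness of $F^\lambda$ and $C^\lambda$ at breakpoints, where multiple deterministic policies may be $\lambda$-optimal with different $(C,F)$ pairs; the statement should be read as: there is a selection of $\lambda$-optimal policies making $F^\lambda, C^\lambda$ piecewise constant, with jumps only at the (finitely many) breakpoints of $\mathcal{L}^\lambda$. I would also want to confirm that restricting the outer minimum to $\Pi^\textrm{MD}$ rather than all of $\Pi$ loses nothing — this is exactly Theorem~\ref{theorem:lambda_optimal_policy} (and the unichain/initial-state-independence it provides), which I may assume. The main obstacle is thus not any single hard estimate but rather the bookkeeping: ensuring the finitely-many-affine-functions picture is rigorous (identifying $\mathcal{L}^\lambda(\pi) = C(\pi) + \lambda F(\pi)$ uniformly in the initial state, which needs the unichain property), and then stating the breakpoint structure cleanly so that the "piecewise" qualifiers are unambiguous. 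The monotonicity inequalities above are the crisp core of the argument and should be presented essentially verbatim.
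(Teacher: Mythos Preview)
Your proposal is correct and takes a genuinely different route from the paper. The paper proves the PWLC property by induction on the relative value iteration: it shows each iterate $\bm{v}^k(s)$ is PWLC in $\lambda$ (since $\min$ and nonnegative combinations preserve PWLC), then invokes convergence of $\bm{v}^k$ (Proposition~\ref{proposition:convergence_of_rvi}) to conclude the limit $\mathcal{L}^\lambda = \bm{v}^*(s_\textrm{ref})$ is PWLC; it outsources the monotonicity of $\mathcal{L}^\lambda$ and $F^\lambda$ to \cite{beutler1985optimal,sennott_1993}. Your argument instead goes straight to the representation $\mathcal{L}^\lambda = \min_{\pi\in\Pi^{\mathrm{MD}}}\big(C(\pi)+\lambda F(\pi)\big)$ as a pointwise minimum of finitely many affine functions, from which piecewise linearity, concavity, and monotonicity drop out immediately, and you supply the standard swap-inequality for the monotonicity of $F^\lambda$ and $C^\lambda$ rather than citing it. Your approach is more self-contained and sidesteps a delicate point in the paper's argument (a pointwise limit of PWLC functions is concave but need not have finitely many pieces without an additional uniform bound, and the difference $\bm{v}^k-\bm{v}^k(s_\textrm{ref})$ of two concave functions is not obviously concave). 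One caveat to tighten in your write-up: not every $\pi\in\Pi^{\mathrm{MD}}$ in a communicating MDP is unichain, so $C_s(\pi), F_s(\pi)$ may depend on $s$; the clean fix is to take the minimum over the finite set of pairs $(\pi,s)$, or simply note that Theorem~\ref{theorem:lambda_optimal_policy} guarantees the $\lambda$-optimal policy has initial-state-independent costs, which is all the representation needs.
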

\begin{proof}
    The monotonicity of $\mathcal{L}^\lambda$ and $F^\lambda$ is demonstrated in \cite[Lemma~3.1]{beutler1985optimal} and \cite[Lemma~3.5]{sennott_1993}. It remains to show that $\mathcal{L}^\lambda$ is piecewise-linear and concave (PWLC) and $F^\lambda (C^\lambda)$ is piecewise constant. We give a proof by induction based on the recursion \eqref{eq:RVI-a}-\eqref{eq:RVI-b}. Since $l^\lambda(s,a,s^\prime) = c(s,a,s^\prime) + \lambda f(s,a)$ is a linear function, then for all $s\in\mathcal{S}$ $\bm{v}^1(s) = \min_{a\in \mathcal{A}}\sum_{s^\prime \in \mathcal{S}}P(s^\prime|s,a)l^\lambda(s,a,s^\prime)$ is a PWLC function of $\lambda$. Then, the function $\tilde{\bm{v}}^1(s) = \bm{v}^1(s) - \bm{v}^1(s_\textrm{ref})$ is also PWLC. Assume the following induction hypothesis: $\tilde{\bm{v}}^k(s)$ is PWLC for all $s\in\mathcal{S}$. Consequently, $\sum_{s^\prime\in\mathcal{S}}P(s^\prime|s,a)\tilde{\bm{v}}^k(s^\prime)$ is PWLC since the summation of PWLC functions is PWLC. It follows that $\bm{v}^{k+1}(s) =  \min_{a\in \mathcal{A}}\sum_{s^\prime\in \mathcal{S}}P(s^\prime|s,a)(l^\lambda(s,a,s^\prime) + \tilde{\bm{v}}^k(s^\prime))$ is still PWLC since the $\min$ of PWLC functions is PWLC. By Proposition~\ref{proposition:convergence_of_rvi}, $\bm{v}^k$ converges to a vector $\bm{v}^*$. Consequently, $\mathcal{L}^\lambda = \bm{v}^*(s_\textrm{ref})$ is PWLC. $F^\lambda$ is the derivative of $\mathcal{L}^\lambda$, so it is piecewise constant. 
\end{proof}

\begin{definition}
($\gamma$ value) By the monotonicity of $F^\lambda$ and $\mathcal{L}^\lambda$, the solution to the outer problem in \eqref{problem:lagrangian} can be defined by
\begin{align}
    \gamma \triangleq \inf\{\lambda:\lambda\geq 0, F^\lambda \leq F_{\max}\}.
\end{align}
\end{definition}

\begin{figure*}[t]
    \centering
    \includegraphics[width=0.8\linewidth]{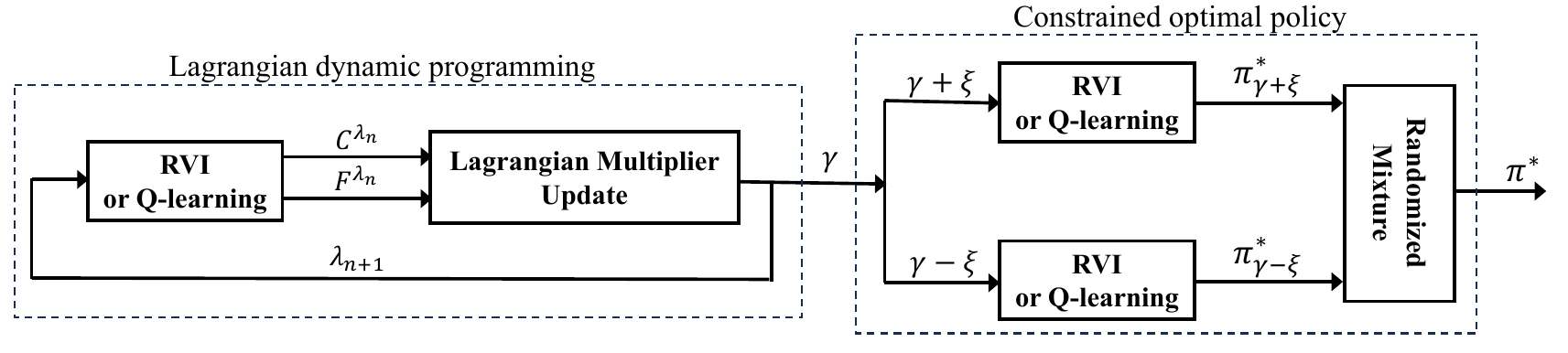}
    \vspace{-0.1in}
    \caption{Schematic representation of the flow diagram of finding the constrained optimal policy of CMDPs. The RVI algorithm is applied to solve Lagrangian MDPs with a fixed $\lambda$. We will develop a Q-learning algorithm for unknown environments in Section~\ref{sec:drl}. The Lagrangian multiplier is updated until the $\gamma$ value is found. Then, the constrained optimal policy is constructed as a mixture of two deterministic policies. The pseudo-code is summarized in Algorithm~\ref{alg:Sa-RVI}.}
    \label{fig:lagrangian_approach}
\end{figure*}

\begin{proposition} \label{proposition:gamma}
For any $F_{\max}\in(0, 1]$, the unconstrained problem in \eqref{problem:lagrangian} always returns a solution, that is $0\leq \gamma<\infty$.
\end{proposition}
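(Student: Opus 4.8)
The plan is to show that the set $\{\lambda:\lambda\geq 0,\ F^\lambda\leq F_{\max}\}$ is non-empty; since it is a subset of $[0,\infty)$, its infimum $\gamma$ is then automatically well-defined and lies in $[0,\infty)$. The engine of the argument is that $F^\lambda$ must decay to zero at rate $O(1/\lambda)$, which follows from a uniform-in-$\lambda$ upper bound on the Lagrangian optimal value $\mathcal{L}^\lambda$.

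First I would introduce the ``never transmit'' policy $\pi_0\in\Pi^\textrm{MD}$ given by $\pi_0(s)=0$ for all $s\in\mathcal{S}$. Under $\pi_0$ the estimates $\hat{X}^m$ never change, so the per-slot CAE is bounded by $C_{\max}\triangleq\sum_{m\in\mathcal{M}}\omega_m\max_{i,j\in\mathcal{X}^m}\delta^m_{i,j}$, which is finite because each $\mathcal{X}^m$ is finite and $\delta^m$ is real-valued. Hence $C_s(\pi_0)\leq C_{\max}$, and since $\pi_0$ makes no transmissions, $F_s(\pi_0)=0$, so $\mathcal{L}_s^\lambda(\pi_0)=C_s(\pi_0)\leq C_{\max}$ for every $\lambda\geq 0$ and every $s\in\mathcal{S}$.

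Second, by Theorem~\ref{theorem:lambda_optimal_policy} the optimal value $\mathcal{L}^\lambda$ is attained and independent of the initial state, so $\mathcal{L}^\lambda\leq\mathcal{L}_s^\lambda(\pi_0)\leq C_{\max}$. Writing $\mathcal{L}^\lambda=C^\lambda+\lambda F^\lambda$ and using $C^\lambda\geq 0$ (the CAE is non-negative), we obtain $\lambda F^\lambda\leq\mathcal{L}^\lambda\leq C_{\max}$, i.e. $F^\lambda\leq C_{\max}/\lambda$ for every $\lambda>0$. Consequently any $\lambda>C_{\max}/F_{\max}$ satisfies $F^\lambda<F_{\max}$, so the feasible set of multipliers contains the ray $(C_{\max}/F_{\max},\infty)$ and is non-empty. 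Combined with the monotonicity of $F^\lambda$ established in Proposition~\ref{proposition:piecewise_continuity}, this yields $0\leq\gamma\leq C_{\max}/F_{\max}<\infty$, as claimed.

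The argument is largely routine; the only step needing a little care is the uniform-in-$\lambda$ bound $\mathcal{L}^\lambda\leq C_{\max}$, which is precisely where the never-transmit policy does the work: it is a fixed admissible comparator whose transmission cost is exactly zero, so penalizing transmissions cannot inflate its Lagrangian cost. One could equally use any fixed SA policy with $\sum_m f_m$ sufficiently small, but $\pi_0$ gives the cleanest bound.
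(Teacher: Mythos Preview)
Your proof is correct. Both you and the paper argue by comparing $\mathcal{L}^\lambda$ to a fixed reference policy with small transmission cost, but the executions differ. The paper proceeds by contradiction: assuming $F^\lambda>F_{\max}$ for all $\lambda\geq 0$, it compares against an SA policy with $F_s(\pi_\textrm{sa})=F_{\max}-\sigma$ and shows that for large $\lambda$ this would beat the alleged optimum. You instead give a direct argument using the never-transmit policy, which has zero transmission cost and thus yields the clean chain $\lambda F^\lambda\leq\mathcal{L}^\lambda\leq C_{\max}$, hence $F^\lambda\leq C_{\max}/\lambda$. Your route is slightly more elementary (no need to invoke the SA construction or Proposition~\ref{proposition:SA_policy}) and has the pleasant by-product of an explicit bound $\gamma\leq C_{\max}/F_{\max}$, which could in fact serve to initialize $\lambda_{\max}$ in the bisection search discussed later in the paper. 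One small notational caution: the paper already uses $\pi_0^*$ for the $\lambda$-optimal policy at $\lambda=0$, so calling the never-transmit policy $\pi_0$ risks confusion; a different symbol (e.g.\ $\pi_\varnothing$) would be safer.
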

\begin{proof}
    Suppose that the assertion of the Lemma is false. Then $\gamma = \infty$ and $F^\lambda > F_{\max}$ for all $\lambda \geq 0$. Consequently, $\mathcal{L}^\lambda = C^\lambda + \lambda F^\lambda > C^\lambda + \lambda F_{\max}$. On the other hand, we can construct an SA policy $\pi_\textrm{sa}$ that satisfies $F_s(\pi_\textrm{sa}) = F_{\max} - \sigma$, where $0\leq \sigma<F_{\max}$. This can be achieved by setting $\sum_{m \in \mathcal{M}}f_m = F_{\max} - \sigma$. For this $\pi_\textrm{sa}$, we have $\mathcal{L}^\lambda_s(\pi_\textrm{sa}) = C_s(\pi_\textrm{sa})+ \lambda F_s(\pi_\textrm{sa}) = C_s(\pi_\textrm{sa}) + \lambda(F_{\max}- \sigma)$. Hence, $\mathcal{L}^\lambda>\mathcal{L}^\lambda_s(\pi_\textrm{sa})$ for sufficient large $\lambda$, which is clearly a contradiction as $\mathcal{L}^\lambda$ attains the infimum. The above holds for any $0<F_{\max}\leq 1$, which completes the proof.
\end{proof}

Although the above propositions ensure the existence of an optimal solution to problem \eqref{problem:lagrangian}, it may not be optimal to the original CMDP \eqref{problem:CMDP} since the duality gap can be non-zero. The authors in \cite[Theorem~4.3]{beutler1985optimal} provide the sufficient conditions under which an optimal policy $\pi^*$ for a Lagrangian MDP is also optimal for the original constrained problem $\mathscr{P}_\textrm{CMDP}$, i.e., zero duality gap. Specifically, the policy $\pi^*$ must solve $\mathscr{P}^\lambda_\textrm{L-MDP}$ for some $\lambda> 0$, and in addition, the average costs satisfy $F^\lambda = F_s(\pi^*) = F_{\max}, C^\lambda = C_s(\pi^*)$ for all $s\in \mathcal{S}$. Notice that $\pi^*$ needs not to be deterministic. We verify these conditions and establish the structure of the optimal policy in the following result.

\begin{theorem} 
\label{theorem:structure_of_optimal_policy}
The optimal policy $\pi^*$ belongs to $\Pi^\textrm{MD}$ if the following condition holds:
\begin{align}
    \gamma (F^\gamma - F_{\max}) = 0.
\end{align}
Otherwise, the optimal policy is a mixture of two policies in space $\Pi^\textrm{MD}$, i.e.,
\begin{align}
    \pi^* = q \pi_{\gamma^{-}}^* + (1-q) \pi_{\gamma^{+}}^*,\label{eq:prove_mixture}
\end{align}
where $q = (F_{\max} - F^{\gamma^{+}})/(F^{\gamma^{-}} - F^{\gamma^{+}})$, $F^{\gamma^{-}}$ and $F^{\gamma^{+}}$ are the left-hand limit and right-hand limit at point $\gamma$. Notice that $\pi_{\gamma^{-}}^*$ is infeasible while $\pi_{\gamma^{+}}^*$ is feasible\footnote{Recall that $F^\lambda$ and $C^\lambda$ may have a jump discontinuity at point $\gamma$. $\pi_{\gamma^{-}}^*$ and $\pi_{\gamma^{+}}^*$ can behave quite differently. Hence, we cannot simply apply the feasible policy $\pi_{\gamma^{+}}^*$ since it may result in undesirable performance.}.
\end{theorem}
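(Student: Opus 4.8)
The plan is to invoke the general theory of CMDPs (specifically \cite[Theorem~4.3]{beutler1985optimal}) and verify its hypotheses using the structural results already established. The first step is the easy case: if $\gamma(F^\gamma - F_{\max}) = 0$, then either $\gamma = 0$ (the constraint is inactive, and $\pi^*_0 \in \Pi^\textrm{MD}$ from Theorem~\ref{theorem:lambda_optimal_policy} is already constrained optimal), or $F^\gamma = F_{\max}$, in which case $\pi^*_\gamma \in \Pi^\textrm{MD}$ solves $\mathscr{P}^\gamma_\textrm{L-MDP}$ and meets the constraint with equality; by the cited sufficient condition it is constrained optimal, with zero duality gap. So the substance lies entirely in the case $\gamma(F^\gamma - F_{\max}) \neq 0$, i.e. $\gamma > 0$ and $F^\gamma \neq F_{\max}$.

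In that case I would argue as follows. By Proposition~\ref{proposition:piecewise_continuity}, $F^\lambda$ is piecewise constant and non-increasing, so at the breakpoint $\gamma$ it has a jump: the left-limit $F^{\gamma^-}$ and right-limit $F^{\gamma^+}$ exist with $F^{\gamma^+} \leq F_{\max} < F^{\gamma^-}$ (the inequality $F^{\gamma^+}\le F_{\max}$ follows from the definition of $\gamma$ as an infimum together with right-continuity of the piecewise-constant $F^\lambda$ on $[\gamma,\infty)$; the strict inequality $F_{\max}<F^{\gamma^-}$ is exactly the assumption $F^\gamma\ne F_{\max}$ combined with $F^{\gamma^-}\ge F^\gamma$). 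Pick deterministic policies $\pi^*_{\gamma^-}$ and $\pi^*_{\gamma^+}$ that are $\lambda$-optimal for $\lambda$ slightly below and at/above $\gamma$ respectively; because $F^\lambda$ and $C^\lambda$ are piecewise constant, these are in fact $\gamma$-optimal as well (both achieve the Lagrangian optimum $\mathcal{L}^\gamma$ at $\lambda = \gamma$, since $\mathcal{L}^\lambda$ is continuous — being PWLC — even though $F^\lambda, C^\lambda$ jump). Now define $q = (F_{\max} - F^{\gamma^+})/(F^{\gamma^-} - F^{\gamma^+}) \in [0,1]$ and set $\pi^* = q\,\pi^*_{\gamma^-} + (1-q)\,\pi^*_{\gamma^+}$, interpreted as the randomized stationary policy that picks one of the two deterministic rules once at the start (a standard "mixing" construction for CMDPs). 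Its average costs are the corresponding convex combinations — this uses that each $\pi^*_{\gamma^\pm}$ induces a unichain/recurrent structure so the averages are initial-state independent, which follows from Proposition~\ref{proposition:SA_policy} / Corollary~\ref{corollary:recurrent} and Theorem~\ref{theorem:lambda_optimal_policy} — so $F_s(\pi^*) = q F^{\gamma^-} + (1-q) F^{\gamma^+} = F_{\max}$ for all $s$, and $C_s(\pi^*) = q C^{\gamma^-} + (1-q) C^{\gamma^+} =: C^\gamma$, again independent of $s$.

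It then remains to check that $\pi^*$ satisfies the hypotheses of \cite[Theorem~4.3]{beutler1985optimal}: it is Lagrangian-optimal for $\gamma > 0$ (the Lagrangian cost of a mixture is the mixture of Lagrangian costs, $= q\mathcal{L}^\gamma + (1-q)\mathcal{L}^\gamma = \mathcal{L}^\gamma$, since both components are $\gamma$-optimal), and $F_s(\pi^*) = F_{\max}$, $C_s(\pi^*) = C^\gamma$ for all $s$. Hence $\pi^*$ is constrained optimal and the duality gap is zero, which also justifies the claim made after \eqref{problem:lagrangian}. Finally I would note the feasibility labeling: $F^{\gamma^+} \leq F_{\max}$ makes $\pi^*_{\gamma^+}$ feasible, while $F^{\gamma^-} > F_{\max}$ makes $\pi^*_{\gamma^-}$ infeasible, yet the latter is needed in the mixture to push the transmission rate up to the budget.

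The main obstacle I anticipate is the bookkeeping around the jump discontinuity — precisely, showing that one may select \emph{the same pair} of deterministic policies $\pi^*_{\gamma^-}, \pi^*_{\gamma^+}$ that are simultaneously $\gamma$-optimal yet realize the two distinct limiting values $F^{\gamma^-}, F^{\gamma^+}$, and that the randomized mixture's average cost is genuinely the convex combination (this needs the initial-state independence, hence the unichain argument, for each component). Verifying the exact hypotheses of the Beutler–Ross result in our notation — in particular that $\gamma > 0$ in the nontrivial case, which is where the condition $\gamma(F^\gamma - F_{\max}) \neq 0$ enters — is routine but must be done carefully.
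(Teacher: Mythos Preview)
Your proposal is correct and follows essentially the same approach as the paper: a case split on $\gamma = 0$, $F^\gamma = F_{\max}$, and the remaining jump case, each handled via the Beutler--Ross sufficient conditions. The paper is terser in the third case, simply invoking \cite[Theorem~4.4]{beutler1985optimal} for the mixture, whereas you spell out the convex-combination argument and the continuity of $\mathcal{L}^\lambda$ at $\gamma$ explicitly; but the logic is the same.
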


\begin{proof}
   Since $F^\lambda$ is a piecewise constant and non-increasing function of $\lambda$ (see Proposition~\ref{proposition:piecewise_continuity}), it can be expressed as
    \begin{align}
        F^\lambda = \sum_{i = 0}^{N}F^{\lambda_i}\mathds{1}(\lambda\in[\lambda_i, \lambda_{i+1})),
    \end{align}
	where $0 = \lambda_0 < \lambda_1 < \ldots <\lambda_N <\lambda_{N+1} \rightarrow \infty$ are the breakpoints of $F^\lambda$, $F^0> F^{\lambda_1} > \ldots > F^{\lambda_N}$ are the possible values of $F^\lambda$. Since $\gamma<\infty$ for any $F_{\max} \in (0, 1]$ (see Proposition~\ref{proposition:gamma}), we prove the theorem by distinguishing between three cases:

	(1) When $F^0 \leq F_{\max}$, it implies that $\gamma = 0$ and the unconstrained policy $\pi_0^*$ is feasible. By \cite[Lemma~3.1]{beutler1985optimal}, $C^\lambda$ is monotonically non-decreasing in $\lambda$. Hence, $\pi_0^*$ achieves the minimum average CAE, $C^0$, and it is therefore optimal. 
	
	(2) When $F^{\lambda_n} = F_{\max}$ for some $n$, then any $\lambda\in [\lambda_n, \infty)$ can result in a feasible $\lambda$-optimal policy. So we have $\gamma = \lambda_n$. The corresponding $\gamma$-optimal policy $\pi_\gamma^*$ satisfies $F(\pi_\gamma^*) = F_{\max}$ and $\mathcal{L}(\pi_\gamma^*) = \mathcal{L}^\gamma$. Therefore, $\pi_\gamma^*$ is constrained optimal.
	
	(3) When $F^{\lambda_{n-1}} < F_{\max} < F^{\lambda_{n}}$ for some $n$, then $[\lambda_{n-1}, \lambda_{n})$ is an infeasible region and $[\lambda_n, \infty)$ is a feasible region. So we have $\gamma = \lambda_n$. For this case, it is proved in \cite[Theorem~4.4]{beutler1985optimal} that the policy in \eqref{eq:prove_mixture} satisfies the sufficient conditions and it is therefore constrained optimal.
\end{proof}

\begin{corollary}\label{corollary:cusp}
$\gamma$ is a corner of $\mathcal{L}^\lambda$ and a breakpoint of $F^\lambda$.
\end{corollary}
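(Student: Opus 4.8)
\textbf{Proof proposal for Corollary~\ref{corollary:cusp}.}

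The plan is to read off the claim directly from the case analysis already performed in the proof of Theorem~\ref{theorem:structure_of_optimal_policy}, together with the structural facts in Proposition~\ref{proposition:piecewise_continuity}. Recall that $\mathcal{L}^\lambda$ is piecewise linear and concave with breakpoints $0 = \lambda_0 < \lambda_1 < \cdots < \lambda_N$, and that $F^\lambda$ is the (right) derivative of $\mathcal{L}^\lambda$, hence piecewise constant with exactly those same breakpoints, taking the strictly decreasing values $F^0 > F^{\lambda_1} > \cdots > F^{\lambda_N}$ on the successive intervals. A ``corner'' of $\mathcal{L}^\lambda$ is precisely a point where its slope changes, i.e., one of the $\lambda_i$ with $i \geq 1$, which is the same thing as a breakpoint (jump point) of $F^\lambda$. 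So it suffices to show $\gamma \in \{\lambda_1, \dots, \lambda_N\}$ whenever $\gamma$ is not already trivial.

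First I would dispose of the degenerate situation. If $F^0 \leq F_{\max}$ (Case~(1) in the proof of Theorem~\ref{theorem:structure_of_optimal_policy}), then $\gamma = 0 = \lambda_0$; strictly speaking $\lambda_0$ is an endpoint of the domain rather than an interior corner, and one may either exclude this boundary case by convention or simply note that the interesting content of the corollary concerns $\gamma > 0$. I would add a one-line remark to this effect. Then, assuming $F^0 > F_{\max}$, the definition $\gamma = \inf\{\lambda \geq 0 : F^\lambda \leq F_{\max}\}$ forces $\gamma > 0$, and since $F^\lambda$ only takes the finitely many values $F^0, F^{\lambda_1}, \dots, F^{\lambda_N}$ and is non-increasing, the infimum is attained at one of the breakpoints: concretely, either $F^{\lambda_n} = F_{\max}$ for some $n$ (Case~(2), giving $\gamma = \lambda_n$) or $F^{\lambda_{n-1}} < F_{\max} < F^{\lambda_n}$ for some $n$ (Case~(3), again giving $\gamma = \lambda_n$). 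In both subcases $\gamma = \lambda_n$ with $n \geq 1$, so $\gamma$ is a breakpoint of $F^\lambda$.

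Finally I would close the loop between ``breakpoint of $F^\lambda$'' and ``corner of $\mathcal{L}^\lambda$'': because $F^\lambda$ is the derivative of $\mathcal{L}^\lambda$ and the values $F^{\lambda_{n-1}}$ and $F^{\lambda_n}$ straddling $\gamma$ are distinct (the $F^{\lambda_i}$ are strictly decreasing), the left and right slopes of $\mathcal{L}^\lambda$ at $\gamma$ differ, which is exactly the statement that $\gamma$ is a corner (kink) of the PWLC function $\mathcal{L}^\lambda$. I do not anticipate a genuine obstacle here — the corollary is essentially a bookkeeping consequence of Propositions~\ref{proposition:piecewise_continuity}, \ref{proposition:gamma} and the case split in Theorem~\ref{theorem:structure_of_optimal_policy}; the only mild subtlety worth flagging explicitly is the boundary case $\gamma = 0$, where ``corner'' should be interpreted loosely (or excluded), and the need to invoke the \emph{strict} monotonicity of the distinct values $F^{\lambda_i}$ to guarantee the slope genuinely changes rather than merely being eligible to.
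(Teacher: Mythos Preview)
Your proposal is correct and is essentially a fully spelled-out version of the paper's own proof, which consists of the single line ``This follows from the proof of Theorem~\ref{theorem:structure_of_optimal_policy}.'' You have unpacked exactly that dependence---the case split showing $\gamma=\lambda_n$ for some breakpoint $\lambda_n$, together with Proposition~\ref{proposition:piecewise_continuity} linking breakpoints of $F^\lambda$ to corners of $\mathcal{L}^\lambda$---and your flag on the boundary case $\gamma=0$ is a sensible caveat the paper leaves implicit.
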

\begin{proof}
This follows from the proof of Theorem~\ref{theorem:structure_of_optimal_policy}.
\end{proof}

\section{Computation of the Optimal Policy}\label{sec:solutions}
This section presents algorithms for finding the optimal policy both with and without \emph{priori} knowledge of the source and channel statistics. When the system statistics are known a \emph{priori}, we propose a novel policy search algorithm, termed \emph{intersection search}, which exploits the monotonicity of $\mathcal{L}^\lambda$ to reduce computation overhead. Moreover, we propose a Reinforcement Learning (RL) algorithm that asymptotically achieves the minimum in unknown environments.

\subsection{An Efficient Search Method for CMDP}\label{sec:intersection_search} 
\subsubsection{Bisection search} Fig.~\ref{fig:lagrangian_approach} illustrates the standard procedure for solving CMDPs\cite{rcpo,djonin2007Q}. Specifically, the inner loop seeks a $\lambda$-optimal policy for a given $\lambda$, while in the outer loop, $\lambda$ is updated until the optimal multiplier $\gamma$ is found. However, in practice, computing $\gamma$ and the corresponding optimal policies is computationally demanding\cite{Ma1986}.

A commonly used method is the \textit{Bisection search plus RVI} (Bisec-RVI) \cite{bisection_robots, AoII_GC}. It leverages the monotonicity of $F^\lambda$ to determine $\gamma$. More precisely, the ``root'' of $F^\lambda - F_{\max}$ in a predetermined interval $[0, \lambda_{\max}]$, where $\lambda_{\max}$ is a reasonably large constant satisfying $F^{\lambda_{\max}} <F_{\max}$. Notice that the ``root'' needs not to exist because of the discontinuity of $F^\lambda$. However, we can bracket $\gamma$ within an arbitrarily small interval. 

At each iteration, the Bisec-RVI method divides the interval into two halves and selects the subinterval that brackets the root. We denote the interval at the $n$-th iteration as $I_n = [\lambda_{-}^{n}, \lambda_{+}^{n}]$ and the middle point of $I_n$ as $\gamma^n$, where $I_0 = [0, \lambda_{\max}]$ and $\gamma^n = (\lambda_{-}^{n} +\lambda_{+}^{n})/2$. The Bisec-RVI uses the RVI to solve the problem $\mathscr{P}_\textrm{L-MDP}^{\gamma^n}$. If the average transmission cost at point $\gamma^n$ is larger than $F_{\max}$, it implies that the root $\gamma$ is located within the subinterval $[\gamma^n, \lambda_{+}^{n}]$. Therefore, we can update the interval as $I_{n+1} = [\gamma^n, \lambda_{+}^{n}]$. On the other hand, if the average transmission cost at point $\gamma^n$ is smaller than $F_{\max}$, then we have $I_{n+1} = [ \lambda_{-}^{n}, \gamma^n]$. This process continues until a desired accuracy is achieved, i.e., $|\lambda_{-}^n - \lambda_{+}^n| < \xi$. Then, the optimal policy is given by
\begin{align}
	\pi^* = \mu \pi^*_{\lambda^n_{-}} + (1-\mu)\pi^*_{\lambda^n_{+}},\label{eq:optimal_policy}
\end{align}
where $\mu$ is a randomization factor given by
\begin{align}
	\mu = \frac{F_{\max} - F^{\lambda^n_{+}}}{F^{\lambda^n_{-}} - F^{\lambda^n_{+}}}.\label{eq:randomization_factor}
\end{align}

\begin{remark}
    The time complexity of the bisection search is $\mathcal{O}(\log_2\frac{\lambda_{\max}}{\xi})$. So it requires a large number of iterations to reach a desired accuracy. Moreover, it exhibits a slow convergence rate near the breakpoint $\gamma$ because $\frac{dF^\lambda}{d\lambda}$ is zero.
\end{remark}
\begin{figure}[t]
    \centering
    \includegraphics[width=0.9\linewidth]{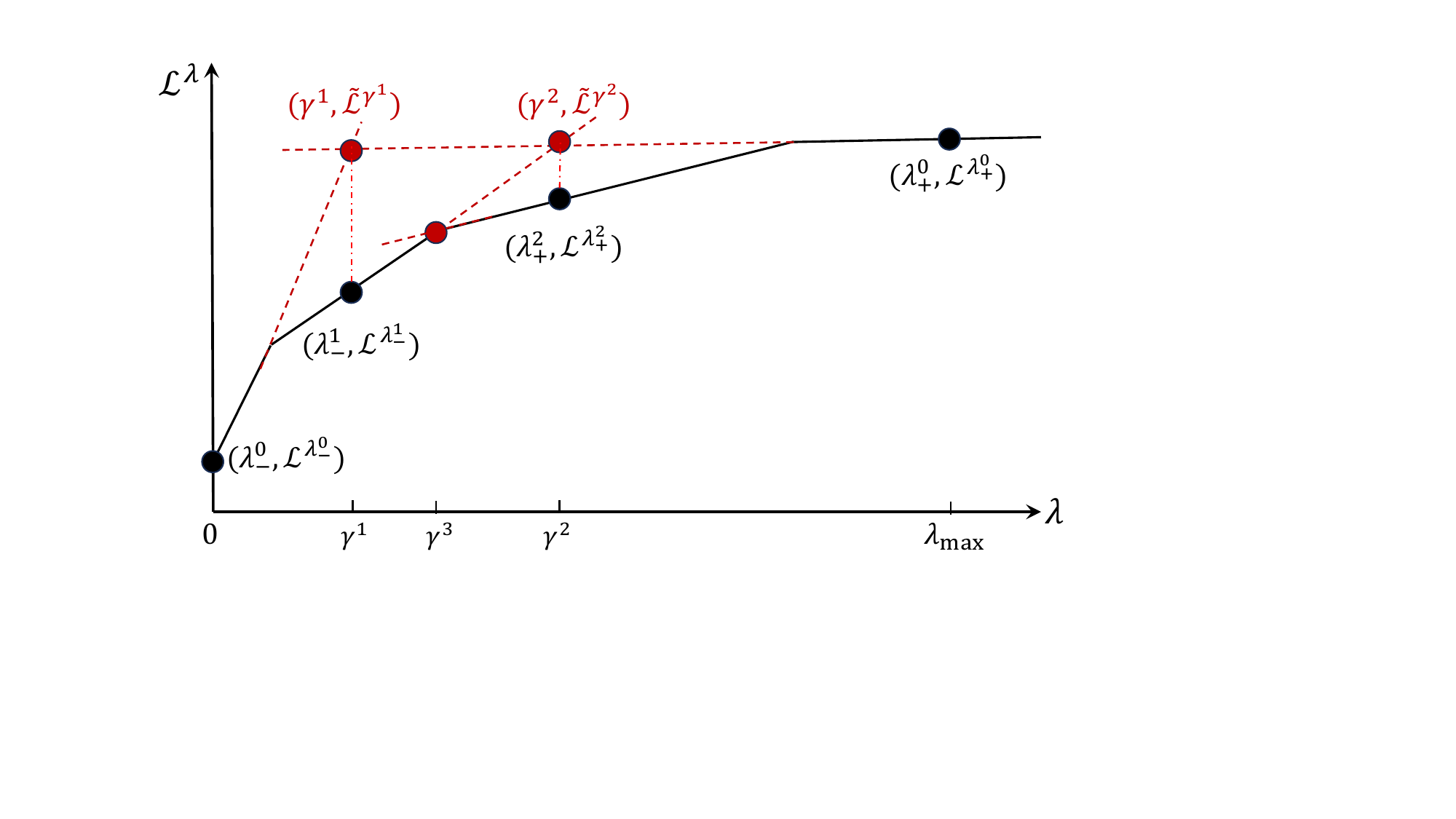}
    \caption{An illustration of the Insec-RVI method. Consider that function $\mathcal{L}^\lambda$ has $4$ segments within the interval $[0, \lambda_{\max}]$ and $\gamma$ is the second corner of $\mathcal{L}^\lambda$. The red dots represent the intersection points at each iteration, while the black dots are the corresponding projected points onto the curve $\mathcal{L}^\lambda$.}
    \label{fig:sarvi}
\end{figure}
\subsubsection{Intersection search} Next, we propose a novel method, called \textit{Intersection search plus RVI} (Insec-RVI), that efficiently solves $\mathscr{P}_\textrm{CMDP}$ by exploiting the concavity and piecewise linearity of $\mathcal{L}^\lambda$. In this method, $\gamma^n$ is no longer the middle point of $I_n$ but an intersection point of two tangents to the function $\mathcal{L}^\lambda$, as shown in Fig~\ref{fig:sarvi}. More specifically, one tangent is formed by a point $(\lambda_{-}^n, \mathcal{L}^{\lambda_{-}^n})$ with a slop of $F^{\lambda_{-}^n}$, and another tangent is formed by a point $(\lambda_{+}^n, \mathcal{L}^{\lambda_{+}^n})$ with a slop of $F^{\lambda_{+}^n}$. Then, the intersection point of these two lines, $(\gamma^n, \tilde{\mathcal{L}}^{\gamma^n})$, is given by
\begin{align}
	\gamma^n 
	&= \frac{C^{\lambda_{+}^n}-C^{\lambda_{-}^n}}{F^{\lambda_{-}^n} - F^{\lambda_{+}^n}},\\
	\tilde{\mathcal{L}}^{\gamma^n} &= F^{\lambda_{-}^n}(\gamma^n - \lambda_{-}^n) + \mathcal{L}^{\lambda_{-}^n}.
\end{align}
Analogous to the Bisec-RVI method, if $F^{\gamma^n}\geq F_{\max}$, then $\lambda_{-}^{n+1} = \gamma^n$, otherwise $\lambda_{+}^{n+1} = \gamma^n$. 

According to Corollary~\ref{corollary:cusp}, $\gamma$ is a corner of $\mathcal{L}^\lambda$. Therefore, the algorithm converges when the intersection point is located on $\mathcal{L}^\lambda$, that is $\tilde{\mathcal{L}}^{\gamma^n} =\mathcal{L}^{\gamma^n}$. To compute the optimal policy, we can simply use Eqs.~\eqref{eq:optimal_policy}-\eqref{eq:randomization_factor} by setting $\lambda_{-}^n = \gamma^n - \zeta/4, \lambda_{+}^n = \gamma^n + \zeta/4$, where $\zeta$ is an arbitrarily chosen small perturbation. Recall that the stopping criterion of the Bisec-RVI is $|\lambda_{-}^n - \lambda_{+}^n|<\xi$. Therefore, the Insec-RVI can achieve comparable performance to the Bisec-RVI by setting $\zeta = \xi$, without introducing any additional computations. In addition, choosing an appropriate $\lambda_{\max}$ value can be problematic when using Bisec-RVI since it can either be too large, leading to excessive computation time, or too small, rendering it infeasible. Fortunately, in our Insec-RVI algorithm, we can simply initialize the costs as $F^{\lambda_{+}} \leftarrow 0$ and $C^{\lambda_{+}} \leftarrow C_{\max}$, where $C_{\max}$ is the average CAE induced by the non-transmission policy. The pseudo-code is summarized in Algorithm~\ref{alg:Sa-RVI}. 

\begin{remark}
    To the best of our knowledge, intersection search is the first method to solve single-constraint CMDPs by leveraging the piecewise linearity of $\mathcal{L}^\lambda$. Suppose $\mathcal{L}^\lambda$ has $W$ segments in the range $[0, \lambda_{\max}]$. The Insec-RVI method requires at most $W-1$ iterations, which is much fewer compared to the Bisec-RVI method as $W-1 \ll \log_2(\frac{\lambda_{\max}}{\xi})$ when $\xi\rightarrow0$. 
\end{remark}

\begin{algorithm}[htbp]
    \renewcommand{\thealgocf}{2}
    \DontPrintSemicolon
    \setstretch{0.8}
    \SetAlgoLined
    \algsetup{linenosize=\small}
\SetKwInOut{Input}{Input}
\SetKwInOut{Output}{Output}
\SetKwInOut{Initilize}{Initilize}
\KwIn{$c$, $f$,$P$, $\lambda_{\max}, \epsilon, \xi$}
\KwOut{$\gamma, \pi^*$}
\BlankLine
\tcc{check if $\gamma=0$}
$\pi_{\lambda_{-}}^*, F^{\lambda_{-}}, C^{\lambda_{-}}, \mathcal{L}^{\lambda_{-}} \leftarrow\operatorname{RVI}(0, f, d, P, \epsilon)$\;
\If{$F^{\lambda_{-}} \leq F_{\max}$}{
    \Return{$\gamma \leftarrow 0, \pi^* \leftarrow\pi_{\lambda_{-}}^*$}
}
\BlankLine
\tcc{find a $\gamma>0$ with $F^\gamma = F_{\max}$}
$F^{\lambda_{+}}\leftarrow0, C^{\lambda_{+}}\leftarrow C_{\max}, \mathcal{L}^{\lambda_{+}} \leftarrow C_{\max}$\;
\While{True}{
    \tcp{Compute the intersection point}
    $\gamma = (C^{\lambda_{+}}-C^{\lambda_{-}})/(F^{\lambda_{-}} - F^{\lambda_{+}})$\;
    $\tilde{\mathcal{L}}^\gamma = F^{\lambda_{-}}(\gamma - \lambda_{-}) + \mathcal{L}^{\lambda_{-}}$\;
    \tcp{Check the stopping condition}
    $F^\gamma, C^\gamma, \mathcal{L}^\gamma \leftarrow\rvi(\gamma, f, d, P, \epsilon)$\;
    \eIf{$\mathcal{L}^{\gamma} = \tilde{\mathcal{L}}^\gamma$}{
        \tcp{Compute the optimal policy}
        \eIf{$F^\gamma = F_{\max}$}{
            $\pi^* \leftarrow \pi^*_\gamma$\;
        }{
            $\pi_{\lambda_{-}}^*, F^{\lambda_{-}}\leftarrow\rvi(\gamma - \xi/4, f, d, P, \epsilon)$\;
            $\pi_{\lambda_{+}}^*, F^{\lambda_{+}}\leftarrow\rvi(\gamma + \xi/4, f, d, P, \epsilon)$\;
            $\mu\leftarrow (F_{\max} - F^{\lambda_{+}}) / (F^{\lambda_{-}} - F^{\lambda_{+}})$\;
            $\pi^* \leftarrow \mu \pi_{\lambda_{-}}^* + (1-\mu) \pi_{\lambda_{+}}^*$\;
        }
        \Return{$\gamma, \pi^*$}
    }{
        \tcp{Shrink the interval}
        \eIf{$F^{\gamma} \leq F_{\max}$}{
            $\lambda_{+} \leftarrow \gamma$, $C^\lambda_{+}\leftarrow C^\gamma$, $F^\lambda_{+}\leftarrow F^\gamma$, $\mathcal{L}^\lambda_{+}\leftarrow \mathcal{L}^\gamma$\;
        }{
            $\lambda_{-} \leftarrow \gamma$, $C^\lambda_{-}\leftarrow C^\gamma$, $F^\lambda_{-}\leftarrow F^\gamma$, $\mathcal{L}^\lambda_{-}\leftarrow \mathcal{L}^\gamma$\;
        }
    }
}
\caption{Insec-RVI for solving $\mathscr{P}_\textrm{CMDP}$.}
 \label{alg:Sa-RVI}
\end{algorithm} 
\vspace{-0.1in}
\subsection{RL for Unknown Environments}\label{sec:drl}
Until now, we have focused on cases where the system transition probabilities are known a \textit{priori}. We rely on this information to compute expected costs in the RVI recursions. However, in practice, it is common for this information to be unknown at the time of deployment or to change over time. 

We propose an RL algorithm to address this challenge. The learning algorithm follows the procedure\footnote{A more efficient learning procedure is to alternate RVI/Q-learning updates with a Lagrangian multiplier update in a single loop\cite{PDO2018, lima2022model}. However, the optimality of this method has not been formally established \cite{rcpo}} in Fig.~\ref{fig:lagrangian_approach}. Specifically, we replace the original RVI recursion \eqref{eq:RVI-a}-\eqref{eq:RVI-b} with a model-free (free of transition probabilities) iterates\cite{rviQlearning, djonin2007Q}. The RL agent learns optimal policies through trial and error by observing state-action-reward transitions.

Now, we present the essence of the learning algorithm. Define the Q-factor $\bm{q}^\lambda(s,a)$ so that
\begin{align}
	\bm{h}^\lambda(s)&= \min_{a\in \mathcal{A}}\bm{q}^\lambda(s,a).
\end{align}
Then, the Bellman's equation \eqref{problem:bellman} can be written as
\begin{align}
	\mathcal{L}^\lambda \hspace{-0.4em}+ \bm{q}^\lambda(s, a) \hspace{-0.2em}= \hspace{-0.4em}
 \sum_{s^\prime\in \mathcal{S}}P(s^\prime|s, a)(l^\lambda(s, a,s^\prime)
	\hspace{-0.2em}+\hspace{-0.2em}\min_{a^\prime\in \mathcal{A}}\bm{q}^\lambda(s^\prime, a^\prime)).
\end{align}
The RVI for the Q-factors is given by
\begin{align}
	\bm{q}^{k+1}&(s,a) + \min_{a^\prime\in \mathcal{A}}\bm{q}^k(s_\textrm{ref},a^\prime) = \notag\\
    &\sum_{s^\prime\in \mathcal{S}}P(s^\prime|s,a)\big(l^\lambda(s,a,s^\prime) + \min_{a^\prime\in \mathcal{A}}\bm{q}^k(s^\prime, a^\prime)\big). \label{eq:q-iterator}
\end{align}
The sequence $\{\min_{a\in \mathcal{A}}\bm{q}^k(s,a)\}_{k\geq 0}$ is expected to converge to $\bm{h}^\lambda(s)$, and the sequence $\{\min_{a\in \mathcal{A}}\bm{q}^k(s_\textrm{ref},a)\}_{k\geq0}$ is expected to converge to $\mathcal{L}^\lambda$.

Then the Q-factor can be estimated via the model-free \textit{average-cost Q-learning}\cite{rviQlearning} by the recursion
\begin{align}
	\bm{q}^{k+1}(s,a) &= (1-\alpha_k)\bm{q}^k(s,a) + \alpha_k
	\big(
		l^\lambda(s,a,s^\prime)  + \notag\\
		 &\qquad\quad \min_{a^\prime\in \mathcal{A}}\bm{q}^k(s^\prime,a^\prime) 
		 - \min_{a^\prime\in \mathcal{A}}\bm{q}^k(s_\textrm{ref},a^\prime)
	\big),\label{eq:q-learning}
\end{align}
where $s^\prime$ is generated from the pair $(s,a)$ by simulation, $\alpha_k\in(0,1)$ is the learning rate that may be constant or diminishing over time. The sequence $\{\bm{q}^k\}$ remains bounded and convergences to $\bm{q}^*$ almost surely\cite[Theorem~3.5]{rviQlearning}. The Q-learning algorithm for solving $\mathscr{P}_\textrm{L-MDP}^\lambda$ is presented in Algorithm~\ref{alg:QL_LMDP}. To solve $\mathscr{P}_\textrm{CMDP}$, one simply needs to replace the RVI recursions in Algorithm~\ref{alg:Sa-RVI} with the Q-learning recursions.

However, the use of value iteration in Insec-RVI and RL leads to significant convergence times and increased computational complexity. Moreover, the time complexity can grow exponentially with the number of sources. This is known as the ``curse of dimensionality'' in MDPs. In Section~\ref{sec:lyapunov}, we will develop a low-complexity online policy.

\section{DPP: A Low-Complexity Benchmark}\label{sec:lyapunov}
This section presents an online \textit{drift-plus-penalty} (DPP) policy. Instead of a brute-force search of optimal policies, the DPP policy makes decisions by solving a simple ``MinWeight'' problem. It converts the average-cost constraint satisfaction problem into the queue stability problem and greedily trades between queue stability and CAE minimization.
\vspace{-0.1in}
\subsection{Problem Transformation}
We first define a \textit{virtual queue} associated with the constraint in \eqref{problem:CMDP} in such a way that a stable virtual queue implies a constraint-satisfying result. Let $Z_t$ denote the virtual queue with update equation
\begin{align}
	Z_{t+1} = \max[Z_t - F_{\max}, 0] + f_t, \label{eq:virtual queue}
\end{align}
where $Z_0 = 0$, $f_t = \mathds{1}(A_t\neq 0)$. Herein, $F_{\max}$ acts as a virtual service rate, and $f_t$ acts as a virtual arrival process. If virtual queue $Z_t$ is \textit{mean rate stable}, then the constraint in \eqref{problem:CMDP} is satisfied with probability $1$\cite{neely2010stochastic}.

We can use the quadratic \textit{Lyapunov function} $\Gamma(Z_t) = \frac{1}{2}Z_t^2$ as a scalar measure of queue congestion. The (one-slot) conditional \textit{Lyapunov drift} for slot $t$ is defined as
\begin{align}
	\Delta(Z_t) \triangleq \mathbb{E}_\textrm{a} \big\{\Gamma(Z_{t+1}) - \Gamma(Z_{t})\big|Z_t\big\}\label{eq:drift function},
\end{align}
where the expectation is with respect to the (possibly random) sampling actions. By using the inequality $(\max[W-b, 0] + a)^2 \leq W^2 + a^2 + b^2 + 2W(a-b)$, we have that $\Delta(Z_t)$ for a general policy satisfies
\begin{align}
	\Delta(Z_t) \leq B + \mathbb{E}_\textrm{a} \bigl\{Z_t (f_t - F_{\max}) \big| Z_t\bigr\}, \label{eq:bound on drift}
\end{align}
where $B = (1 + F^2_{\max})/2$.

To stabilize the virtual queue and minimize the average CAE, we can seek a policy to minimize the upper bound on the following expression:
\begin{align}
	\mathbb{E}_\textrm{a} \bigl\{Z_t (f_t - F_{\max})\big|Z_t \big\} + V \mathbb{E}_\textrm{sa} \bigl\{c_t\big|Z_t \big\},\label{eq:dpp_expression}
\end{align}
where $c_t$ is the CAE at time $t$, $V\geq0$ is the weight factor that represents how much emphasis we put on CAE minimization. Notice that the expectation of the second term is with respect to the system dynamics and the (possibly random) actions. 
\begin{algorithm}[htbp]
    \renewcommand{\thealgocf}{3}
    \DontPrintSemicolon
    \SetAlgoLined
    \algsetup{linenosize=\small}
	\KwIn{$\lambda$, $c$, $f$}
	\KwOut{$\pi_\lambda^*, \hat{F}^\lambda, \hat{C}^\lambda$, $\hat{\mathcal{L}}^\lambda$}
	\BlankLine
	$s_\textrm{ref}=0, \bm{q}^0(s,a) =0$ for all $s\in\mathcal{S}, a\in\mathcal{A}$\;
	\For{$k = 0, 1, \ldots, K-1$}  {    
		\For{$(s,a) \in \mathcal{S}\times\mathcal{A}$}{   
			\textsc{Apply} action $a$ to the system and observe $s^\prime$.\;
			\textsc{Update} the Q-factor using Eq.~\ref{eq:q-learning}\;
		}
	}
	\For{each $s\in \mathcal{S}$}{    
			$\pi^*_\lambda(s)\leftarrow\argmin_{a\in \mathcal{A}}\bm{q}^k(s,a)$\;
	}
	\textsc{Apply} policy $\pi^*_\lambda$ to the system and calculate $\hat{\mathcal{L}}^\lambda, \hat{F}^\lambda$, and $\hat{C}^\lambda$ based on empirical average.\;
	\Return{$\pi_\lambda^*, \hat{F}^\lambda, \hat{C}^\lambda$, $\hat{\mathcal{L}}^\lambda$}\;
	\caption{Average-cost Q-learning for $\mathscr{P}_\textrm{L-MDP}^\lambda$.}
 \label{alg:QL_LMDP}
\end{algorithm}
\vspace{-0.2in}
\subsection{Drift-Plus-Penalty Policy}
The DPP method operates as follows. Let $\bar{c}_t$ denote the (one-step) expected CAE given by Lemma~\ref{lemma:expected CAE}. Every slot $t$, given current queue backlog $Z_t$ and system state $S_t$, it chooses an action by solving the following problem:
\begin{align}
	\min_{\alpha_t \in \mathcal{A}} ~ Z_t (f_t - F_{\max}) + V \bar{c}_t. \label{problem:DPP}
\end{align}
The proposed DPP method is summarized in Algorithm \ref{alg:dpp}. In Theorem~\ref{theorem:stability}, we show that the proposed DPP policy satisfies the transmission frequency constraint.

\begin{lemma}\label{lemma:expected CAE} The one-step expected CAE is given by
	\begin{align}
		\bar{c}_t = \mathbb{E}_\textrm{s} \bigl\{c_t \big| Z_t\bigr\} = \sum_{m \in \mathcal{M}} \omega_m  \bar{\delta}^m_t,
	\end{align}
	where the expectation is with respect to the system dynamics, $\bar{\delta}^m_t$ is the one-slot expected CAE of subsystem $m$. For one-delay case, $\bar{\delta}^m_t$ is given as
	\begin{align}
		\bar{\delta}^m_t = 
		\begin{cases}
			\sum\limits_{k\neq i} \delta^m_{k,i} \bm{Q}^m_{i,k} p_s 
			+ 
			\sum\limits_{k\neq j} \delta^m_{k,j} \bm{Q}^m_{i,k} (1-p_s), & \textrm{if}~\alpha_t = m, \\
			\sum\limits_{k\neq j} \delta^m_{k,j} \bm{Q}^m_{i,k}, & \textrm{if}~\alpha_t \neq m.
		\end{cases}\notag
	\end{align}
	where $i,j,k\in\mathcal{X}^m$. Similarly, for zero-delay case we have
	\begin{align}
		\bar{\delta}^m_t = 
		\begin{cases}
			\delta^m_{i,j} (1-p_s), & \textrm{if}~\alpha_t = m, \\
			\delta^m_{i,j}, & \textrm{if}~\alpha_t \neq m.
		\end{cases}\notag
	\end{align}
\end{lemma}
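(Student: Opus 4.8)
The plan is to compute the one-step expected CAE directly from the definitions of the cost function \eqref{eq:CAE} and the subsystem transition dynamics \eqref{eq:Pm}, conditioning on the current system state $S_t$ (which is known when the decision is made) and the chosen action $\alpha_t$. Since the weights $\omega_m$ are deterministic constants and the CAE decomposes additively across subsystems, linearity of expectation immediately gives $\bar{c}_t = \sum_{m\in\mathcal{M}}\omega_m\,\mathbb{E}_\textrm{s}[\delta^m(X^m_{t+d},\hat{X}^m_{t+d})\,|\,S^m_t,\alpha_t]$, so it suffices to evaluate each subsystem term $\bar\delta^m_t$ separately. Note that conditioning on $Z_t$ is irrelevant here because the virtual queue is a deterministic function of past actions and does not affect the source/channel dynamics; the expectation is purely over the one-step randomness $(X^m_{t+1},H_t)$.

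First I would handle the one-delay case, where $S^m_t=(X^m_t,\hat X^m_t)=(i,j)$ and the cost is evaluated at the \emph{next} slot, i.e. $\delta^m(X^m_{t+1},\hat X^m_{t+1})$. When $\alpha_t=m$, the source moves to state $k$ with probability $\mathbf{Q}^m_{i,k}$, and independently the transmitted packet (carrying $X^m_t=i$) is delivered with probability $p_s$, making $\hat X^m_{t+1}=i$, or lost with probability $p_f=1-p_s$, leaving $\hat X^m_{t+1}=j$ (using the estimate-update rule \eqref{eq:estimate} and the transition structure in \eqref{eq:Pm}). Taking the expectation over $(k,H_t)$ and using that $\delta^m_{i,i}=0$ — more precisely that the terms with $k=i$ in the success branch and $k=j$ in the failure branch vanish since a correct estimate incurs zero cost — yields $\bar\delta^m_t=\sum_{k\neq i}\delta^m_{k,i}\mathbf{Q}^m_{i,k}p_s+\sum_{k\neq j}\delta^m_{k,j}\mathbf{Q}^m_{i,k}(1-p_s)$. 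When $\alpha_t\neq m$, the estimate is frozen at $j$ regardless, so $\hat X^m_{t+1}=j$ and $\bar\delta^m_t=\sum_k\delta^m_{k,j}\mathbf{Q}^m_{i,k}=\sum_{k\neq j}\delta^m_{k,j}\mathbf{Q}^m_{i,k}$.

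Next I would treat the zero-delay case. Here the relevant subtlety is the state definition: $S^m_t=(X^m_t,\hat X^m_{t-1})$, and the cost in \eqref{eq:CAE} with $d=0$ is evaluated at the \emph{current} slot as $\delta^m(X^m_t,\hat X^m_t)$, where $\hat X^m_t$ is determined by the action $A_t$ taken now. If $\alpha_t=m$ and the transmission succeeds (probability $p_s$), then $\hat X^m_t=X^m_t$, so the cost is zero; if it fails (probability $1-p_s$), then $\hat X^m_t=\hat X^m_{t-1}$, and writing $(X^m_t,\hat X^m_{t-1})=(i,j)$ the incurred cost is $\delta^m_{i,j}$, giving $\bar\delta^m_t=\delta^m_{i,j}(1-p_s)$. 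If $\alpha_t\neq m$, then $\hat X^m_t=\hat X^m_{t-1}=j$ deterministically, so $\bar\delta^m_t=\delta^m_{i,j}$. Summing $\omega_m\bar\delta^m_t$ over $m$ completes the proof.

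I do not anticipate a genuine obstacle — this is essentially a bookkeeping exercise — but the one point requiring care is keeping the two cases' state conventions straight (in particular that the zero-delay cost depends on $X^m_t$ and $\hat X^m_{t-1}$ evaluated at the \emph{same} index $t$, whereas the one-delay cost requires propagating the source one step forward via $\mathbf{Q}^m$), and correctly arguing that $k=i$ (resp.\ $k=j$) terms drop out because $\delta^m$ vanishes on the diagonal of matched estimates. Consistency with the transition law \eqref{eq:Pm} and the estimator \eqref{eq:estimate} must be invoked explicitly at each branch.
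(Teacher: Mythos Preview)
Your proposal is correct and is precisely the direct computation the paper has in mind; in fact the paper states Lemma~\ref{lemma:expected CAE} without proof, treating it as an immediate consequence of the cost definition~\eqref{eq:CAE}, the estimator update~\eqref{eq:estimate}, and the subsystem transitions~\eqref{eq:Pm}. The only minor caveat is that the paper never explicitly assumes $\delta^m_{i,i}=0$ in the text (though the numerical CAE matrices have zero diagonal), so when you drop the $k=i$ and $k=j$ terms you should either invoke this as a standing convention or simply leave the sums unrestricted, which changes nothing.
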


\begin{theorem}\label{theorem:stability}
	For any $V\geq 0$, the DPP policy satisfies the transmission frequency constraint in \eqref{problem:CMDP}.
\end{theorem}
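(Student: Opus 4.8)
The plan is to show that the virtual queue $\{Z_t\}$ defined in \eqref{eq:virtual queue} is \emph{mean rate stable} under the DPP policy, i.e.\ $\lim_{T\to\infty}\mathbb{E}[Z_T]/T = 0$; by \cite{neely2010stochastic} this immediately implies $F_s(\pi^\textrm{DPP})\le F_{\max}$ for every initial state, which is exactly the constraint in \eqref{problem:CMDP}. The engine of the argument is a \emph{Lyapunov drift comparison}: because the DPP rule \eqref{problem:DPP} minimizes $Z_t(f_t-F_{\max}) + V\bar c_t$ pointwise over $a\in\mathcal{A}$ for the realized pair $(Z_t,S_t)$, its drift-plus-penalty is no larger than that of any alternative stationary policy, in particular a \emph{strictly feasible} source-agnostic (SA) policy.

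First I would fix such a comparison policy $\pi_\textrm{sa}$: set $f_m = F_{\max}/(2M)$ for each $m\in\mathcal{M}$, so that $0<f_m<F_{\max}$ (Proposition~\ref{proposition:SA_policy} guarantees this chain is well behaved) and $\sum_{m\in\mathcal{M}} f_m = F_{\max}/2$. Writing $\sigma \triangleq F_{\max}/2 > 0$, we have $\mathbb{E}_\textrm{a}[f_t^\textrm{sa}\,|\,Z_t] = F_{\max}-\sigma$ for every slot, independently of $S_t$ and $Z_t$. I would also record that the per-slot expected CAE is uniformly bounded: since each $\mathcal{X}^m$ is finite, $0\le \bar c_t \le c_{\max}\triangleq \sum_{m\in\mathcal{M}}\omega_m \max_{i,j\in\mathcal{X}^m}\delta^m_{i,j} < \infty$ under any policy.

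Next, starting from the drift bound \eqref{eq:bound on drift} and adding $V\,\mathbb{E}_\textrm{a}[\bar c_t\,|\,Z_t]$ to both sides, I would apply the defining optimality of the DPP action and then take conditional expectation given $Z_t$, substituting $\mathbb{E}_\textrm{a}[f_t^\textrm{sa}\,|\,Z_t] = F_{\max}-\sigma$ and $\mathbb{E}[\bar c_t^\textrm{sa}\,|\,Z_t]\le c_{\max}$. Dropping the nonnegative term $V\,\mathbb{E}_\textrm{a}[\bar c_t^\textrm{DPP}\,|\,Z_t]$ on the left yields the negative-drift inequality
\begin{align}
	\Delta(Z_t) \le B + V c_{\max} - \sigma Z_t .\notag
\end{align}
Taking total expectations, telescoping over $t=0,\dots,T-1$, and using $\Gamma(Z_0)=0$ and $\Gamma(Z_T)\ge 0$ gives $\mathbb{E}[\Gamma(Z_T)] \le (B+Vc_{\max})T$, hence $\mathbb{E}[Z_T]\le \sqrt{2(B+Vc_{\max})T}$ by Jensen's inequality, so $\mathbb{E}[Z_T]/T\to 0$. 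Finally, summing the queue recursion $Z_{t+1}\ge Z_t - F_{\max} + f_t$ over a horizon $T$ gives $\tfrac1T\sum_{t=0}^{T-1}\mathbb{E}[f_t] \le F_{\max} + \mathbb{E}[Z_T]/T$, and letting $T\to\infty$ establishes $F_s(\pi^\textrm{DPP}) \le F_{\max}$, as required; the bound holds for any $V\ge 0$ since $c_{\max}$ does not depend on $V$.

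The step I expect to be the main obstacle is the policy-comparison argument: one must make precise that the \emph{per-slot} pointwise optimality of the DPP rule dominates a policy (here $\pi_\textrm{sa}$) whose actions ignore $Z_t$, which requires care in how the conditional expectations over $S_t$ and the SA randomization are nested, and it relies on establishing that a \emph{strictly} feasible stationary policy with uniformly bounded per-slot CAE actually exists (supplied by the SA construction and the finiteness of the state spaces). The remaining telescoping and Jensen estimates are routine.
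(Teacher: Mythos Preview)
Your proposal is correct and follows essentially the same Lyapunov-drift comparison with a strictly feasible SA policy that the paper uses; the only difference is cosmetic—you fix $\sigma=F_{\max}/2$ and unpack the mean-rate-stability conclusion explicitly (telescoping plus Jensen), whereas the paper leaves $\sigma>0$ generic and appeals directly to \cite[Theorem~4.2]{neely2010stochastic}. The potential obstacle you flag in the comparison step is handled exactly as you describe: pointwise minimization over $a\in\mathcal{A}$ at each realized $(Z_t,S_t)$ dominates any randomized, state-oblivious choice, and taking conditional expectation preserves the inequality.
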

\begin{proof}
It is sufficient to show that the virtual queue under the DPP policy is mean rate stable. Our proof relies on the SA policy discussed in Section~\ref{sec:problem_formulation} and the Lyapunov optimization theorem\cite[Theorem~4.2]{neely2010stochastic}. Consider an SA policy $\pi_\textrm{sa}$ with
\begin{align}
    f_m = \frac{F_{\max}-\sigma}{M},\quad m = 1, \ldots, M,\notag
\end{align}
where $\sigma\geq 0$ is a small constant. The SA policy takes action irrespective of the system state and the queue backlog. Moreover, it satisfies the following slackness assumption 
\begin{align}
    \mathbb{E}\{f_t(\pi_\textrm{sa}) - F_{\max}|Z_t\} = -\sigma.\notag
\end{align}

Implementing both the SA policy and the DPP policy yields:
\begin{align}
    &\Delta(Z_t) \hspace{-0.2em}+\hspace{-0.2em} V\bar{c}_t \notag
    \leq B \hspace{-0.2em}+\hspace{-0.2em} V\bar{c}_t(\pi_\textrm{dpp}) \hspace{-0.2em}+\hspace{-0.2em} \mathbb{E}\{Z_t(f_t(\pi_\textrm{dpp}) - F_{\max})|Z_t\} \notag\\
    &\overset{(a)}{\leq} B + V\bar{c}_t(\pi_\textrm{sa}) + Z_t \mathbb{E}\{f_t(\pi_\textrm{sa}) - F_{\max}|Z_t\}\notag
    \overset{(b)}{\leq} B^\prime -\sigma Z_t,\notag
\end{align}
where (a) holds because the DPP chooses the best action, including the one taken by the SA policy; (b) holds because the SA policy is independent of queue backlog and it satisfies the slackness assumption. It follows directly from \cite[Theorem~4.2]{neely2010stochastic} that $Z_t$ is mean rate stable.
\end{proof}
\vspace{-0.1in}
\begin{remark}
    The DPP policy has a low complexity of $\mathcal{O}(|\mathcal{A}|)$ and can support large-scale systems. However, DPP is sub-optimal because it greedily selects actions and ignores long-term performance. Moreover, it requires prior knowledge of the channel and source statistics to compute the expected costs of taking an action in a given state. Of course, one can apply an estimation of the system statistics and then use this approach, but with a penalty on the performance that will depend on the estimate's accuracy.
\end{remark}

A comparison of the proposed policies is provided in Table~\ref{table:complexity}, where $I_\textrm{RVI}$ and $I_\textrm{QL} (I_\textrm{RVI} \ll I_\textrm{QL})$ are the number of iterations required for the RVI and the average-cost Q-learning, respectively. We note that the time complexity of a policy includes two phases: (i) an offline phase (before deployment) that tunes the policy parameters using historical data or prior system information, and (ii) an online phase (after deployment) that interacts with the system and selects the best action using pre-determined rules or policies. The Insec-RVI and the RL algorithms exhibit obvious complexity disadvantages in the offline phase. In contrast, the DPP policy requires no computational expenditure in the offline phase and has low online time complexity. Therefore, \emph{there is a tradeoff between optimality and complexity}. In practice, one can choose a desired policy according to the availability of system dynamics and resources.

\begin{table}[htbp]
\centering
\scriptsize
\caption{Comparison of The Proposed Policies}
\begin{tabular}{ c | c | c | c | c}
    \hline
     \multirow{2}{*}{\textbf{Algorithm}}
     & \multirow{2}{*}{\textbf{Scalability}}
     & \multirow{2}{*}{\textbf{Optimality}} 
     & \multicolumn{2}{c}{\textbf{Time complexity}}\\
    \cline{4-5}
     & & & \textbf{Offline} & \textbf{Online}\\
    \hline
    \text{SA} & \checkmark & Sub-optimal & --- & $\mathcal{O}(1)$\\
    \text{Bisec-RVI} & $\times$ & Optimal & $\mathcal{O}(\log_2(\frac{\lambda_{\max}}{\epsilon})I_\textrm{RVI})$  & $\mathcal{O}(1)$\\
    \text{Insec-RVI} & $\times$ & Optimal & $\mathcal{O}\left((W-1)I_\textrm{RVI}\right)$  & $\mathcal{O}(1)$\\
    \text{RL} & $\times$  & Optimal & $\mathcal{O}\left((W-1)I_\textrm{QL}\right)$ &  $\mathcal{O}(1)$ \\
    \text{DPP} & \checkmark  & Sub-optimal & --- &   $\mathcal{O}(|\mathcal{A}|)$ \\
    \hline
\end{tabular}
\label{table:complexity}
\end{table}

\begin{algorithm}[htbp]
    \renewcommand{\thealgocf}{4}
    \DontPrintSemicolon
    \SetAlgoLined
    \algsetup{linenosize=\small}
    \caption{DPP policy for $\mathscr{P}_\textrm{CMDP}$.}
    \label{alg:dpp}
    Set $V$, and initialize virtual queue $Z_0 = 0$\;
    \For{$t = 1, 2, \ldots$}{
		\textsc{Observe} $S_t$ and $Z_t$ at the beginning of slot $t$.\; \textsc{Calculate} $\bar{c}_t$ using Lemma \ref{lemma:expected CAE}.\;
		\textsc{Select} an action $a^*$ according to \eqref{problem:DPP}.\;
		\textsc{Apply} action $a^*$. Update $Z_t$ using \eqref{eq:virtual queue} and $S_t$ according to \eqref{eq:Pm}\;
	}
\end{algorithm}

\begin{figure*}[htbp]
    \begin{subfigure}{0.33\linewidth}
        \centering
        \includegraphics[width=\linewidth]{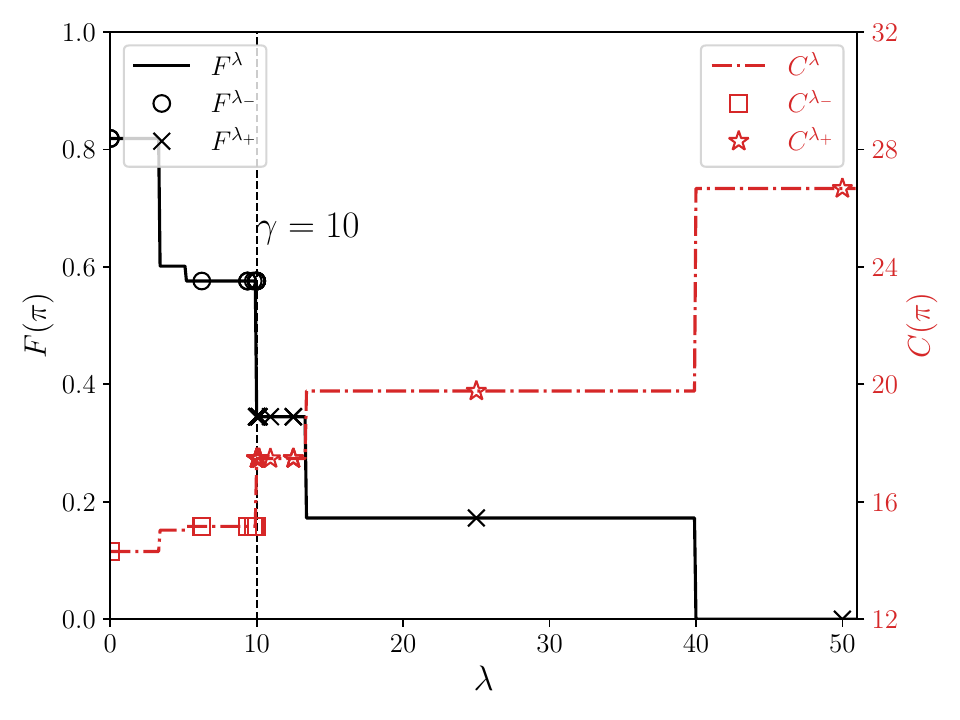}\vspace{-0.1in}
        \caption{Bisection search.}
        \label{fig:birviconvergence}
    \end{subfigure}
    \begin{subfigure}{0.33\linewidth}
        \centering
        \includegraphics[width=\linewidth]{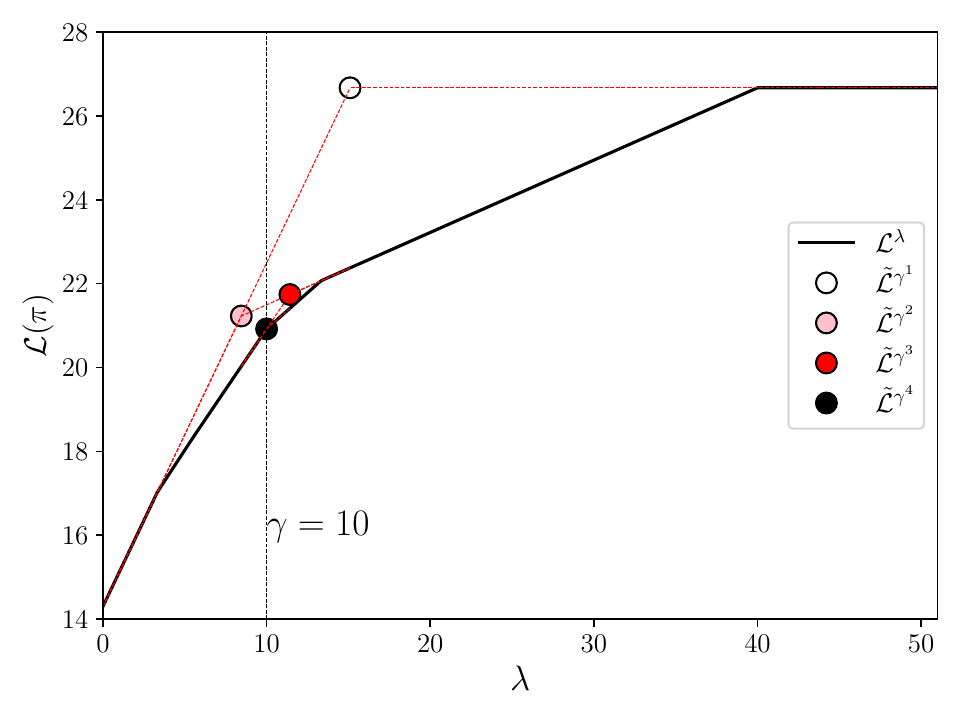}\vspace{-0.1in}
        \caption{Intersection search.}
        \label{fig:sarviconvergence}
    \end{subfigure}     
    \begin{subfigure}{0.33\linewidth}
	\centering
	\includegraphics[width=\linewidth]{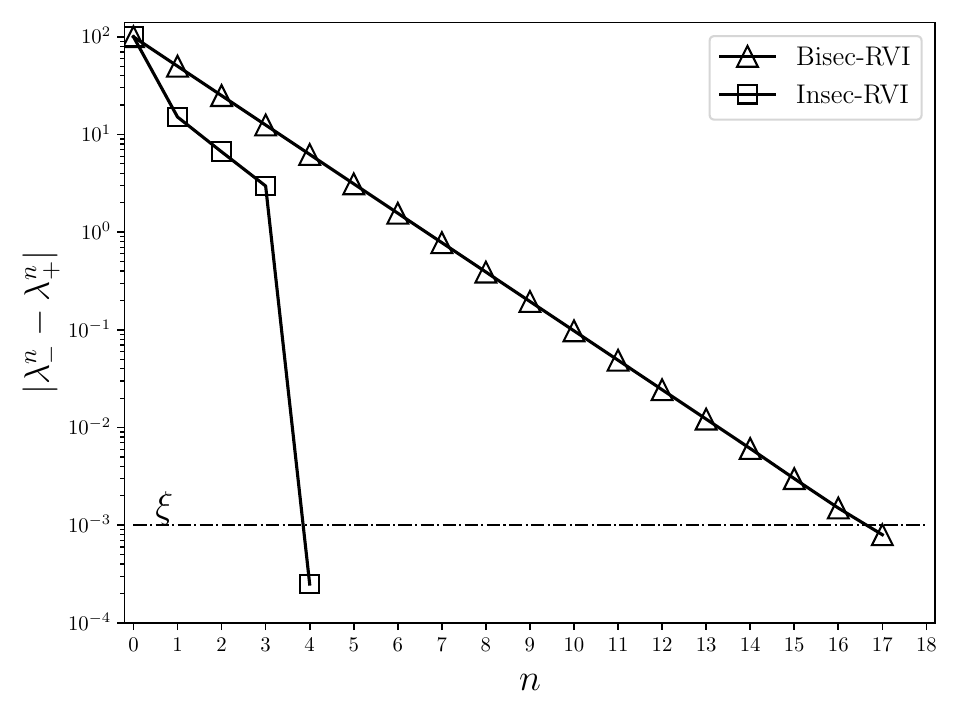}\vspace{-0.1in}
	\caption{Convergence performance.}
	\label{fig:comparebisa}
    \end{subfigure}
    \vspace{-0.2in}
    \caption{Comparison of different policy search methods when $F_{\max} = 0.4$ and $p_s = 0.4$. The optimal Lagrangian multiplier is found at $\lambda = 10$.}
    \label{fig:policysearch}
\end{figure*}
\begin{figure*}[htbp]
    \begin{subfigure}{0.33\linewidth}
        \centering
        \includegraphics[width=\linewidth]{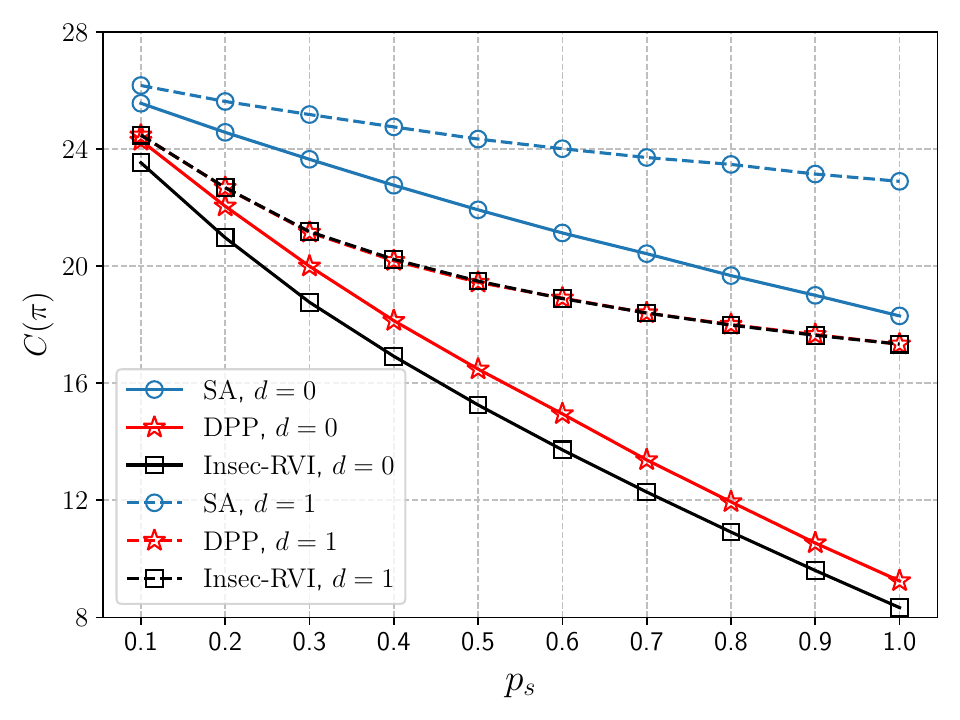}
        \caption{Avg. CAE vs. $p_s$ when $F_{\max} = 0.4$.}
        \label{fig:obj_ps}
    \end{subfigure}
    \begin{subfigure}{0.33\linewidth}
        \centering
        \includegraphics[width=0.99\linewidth]{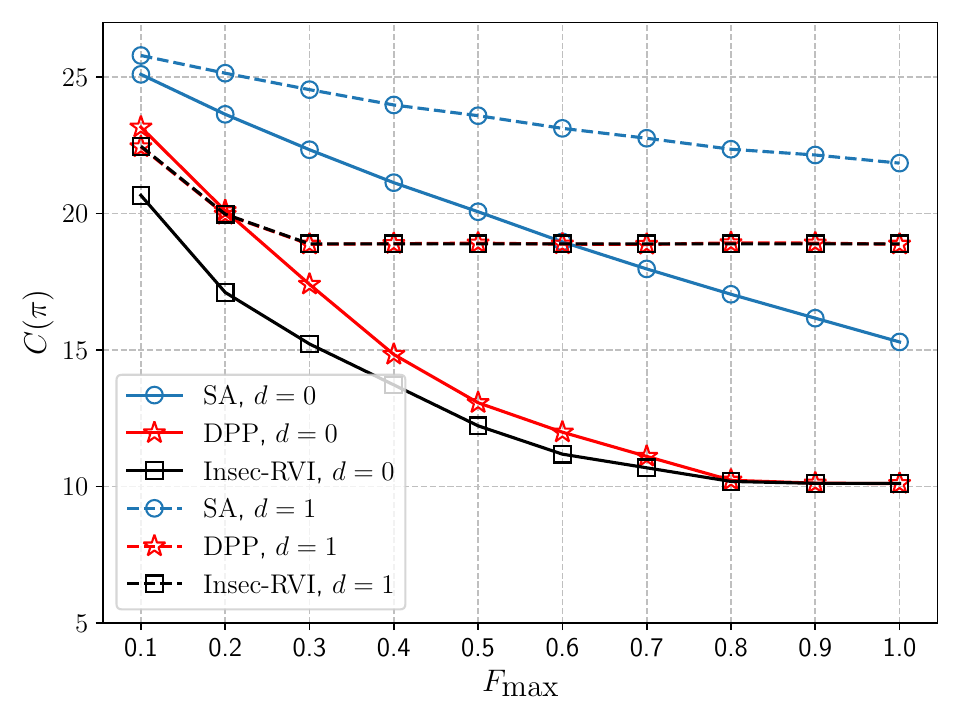}
        \caption{Avg. CAE vs. $F_{\max}$ when $p_s = 0.6$.}
        \label{fig:obj_cmax}
    \end{subfigure}  
    \begin{subfigure}{0.33\linewidth}
        \centering
        \includegraphics[width=\linewidth]{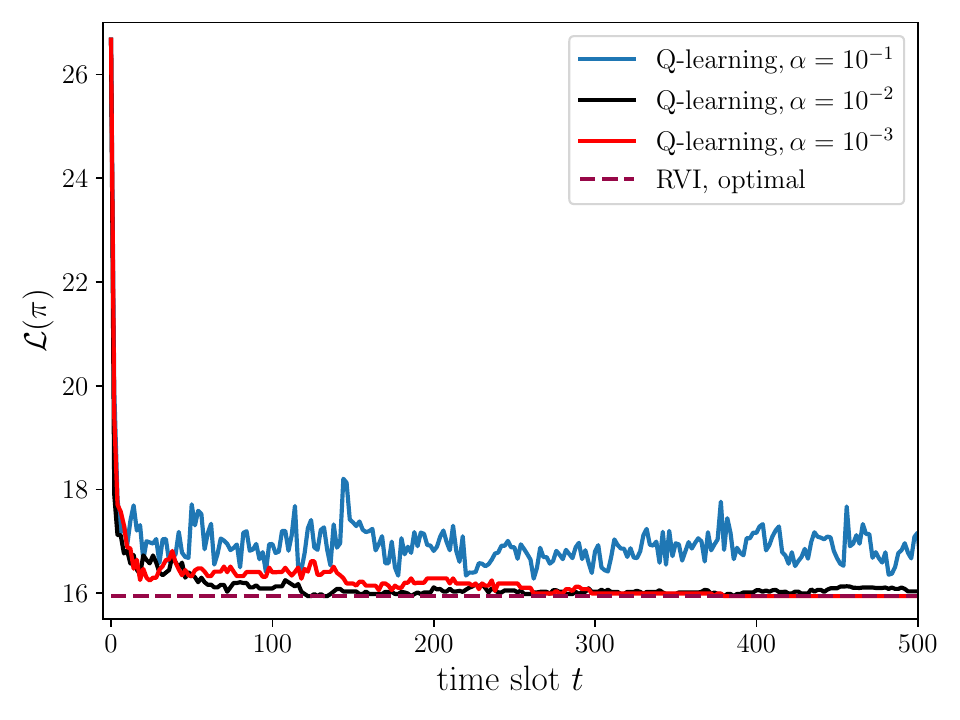}
        \caption{Convergence of the average-cost Q-learning.}
        \label{fig:qlearning-convergence}
    \end{subfigure}  
    \caption{Performance comparisons of different policies.}
    \label{fig:performancecomparison}
\end{figure*}
\begin{figure*}[htbp]
    \begin{subfigure}{0.33\linewidth}
         \centering
        \includegraphics[width=\linewidth]{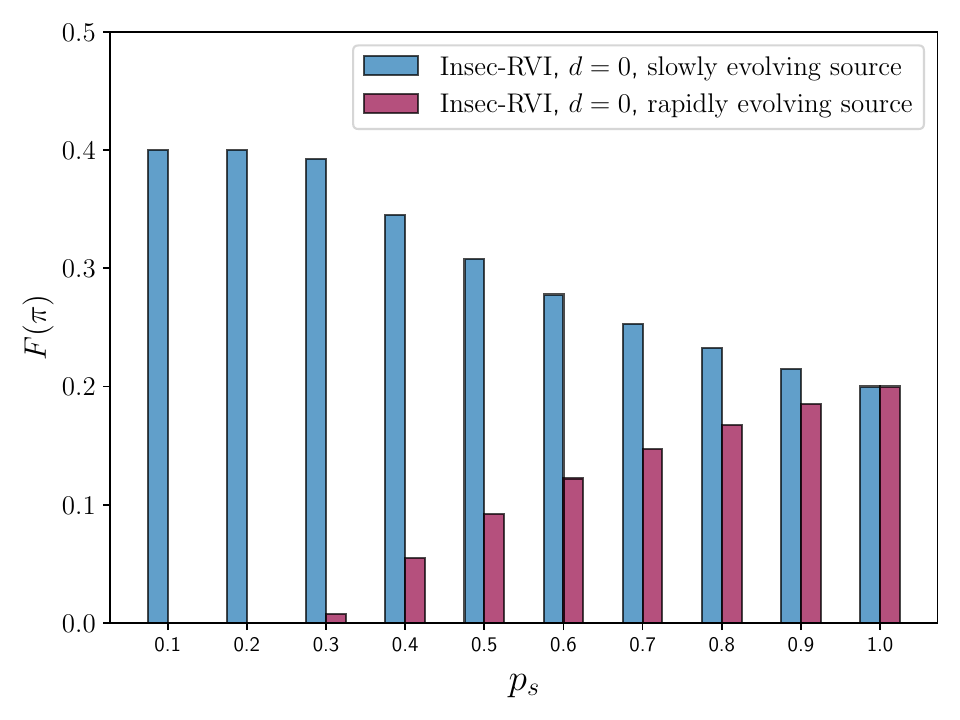}\vspace{-0.1in}
        \caption{Insec-RVI policy ($d=0$).}
        \label{fig:rvi_zero_delay_trans}
    \end{subfigure}
    \begin{subfigure}{0.33\linewidth}
         \centering
        \includegraphics[width=\linewidth]{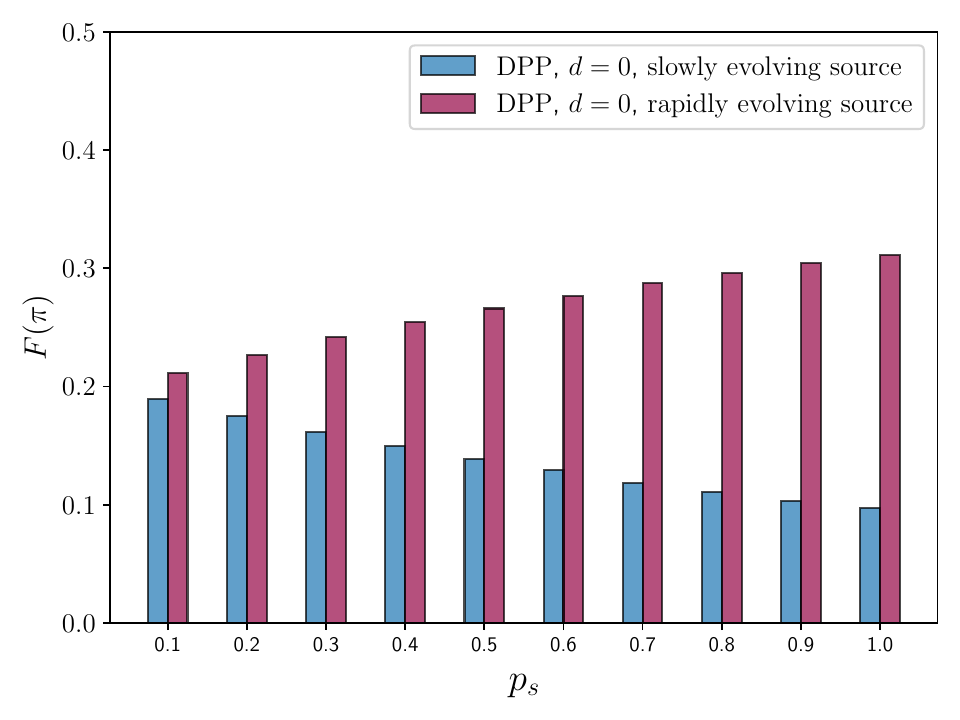}\vspace{-0.1in}
        \caption{DPP policy ($d=0$).}
        \label{fig:dpp_zero_delay_trans}
    \end{subfigure}  
    \begin{subfigure}{0.33\linewidth}
         \centering
        \includegraphics[width=\linewidth]{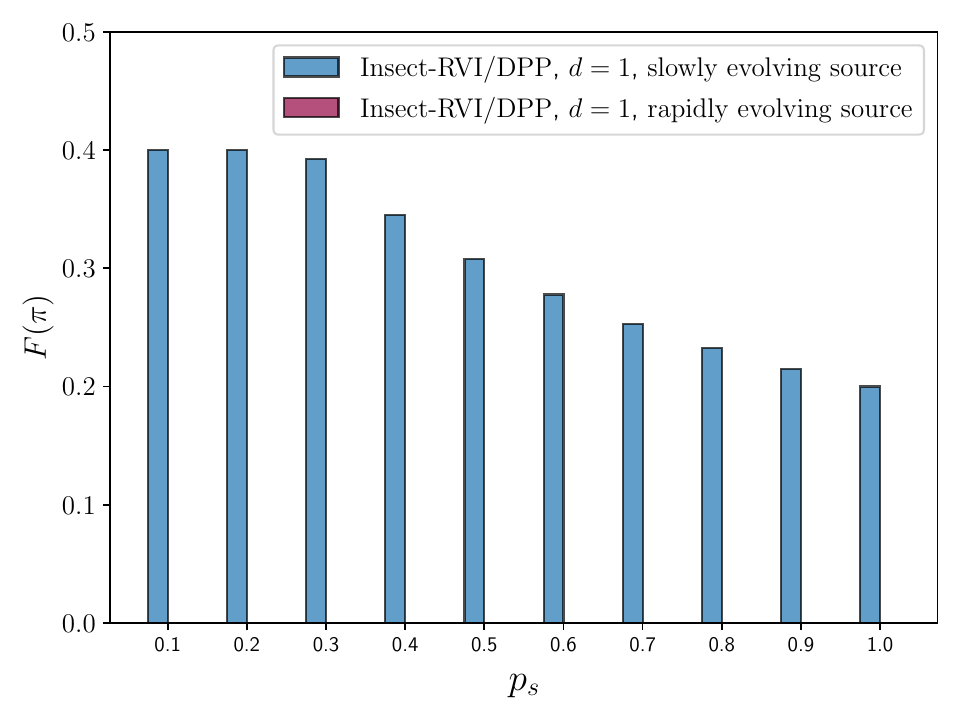}\vspace{-0.1in}
        \caption{Insec-RVI and DPP policies ($d=1$).}
        \label{fig:onedelaytrans}
    \end{subfigure}  
    \caption{Transmission frequency vs. $p_s$ when $F_{\max} = 0.4$.}
    \label{fig:behavioralanalysis}
\end{figure*}

\section{Numerical Results}\label{sec:results}
In this section, we validate the performance of the proposed methods. We consider an NCS with $M=2$ information sources, both assumed to be equally important, i.e., $\omega_1 = \omega_2 = 1$. Each source has three states and is characterized by a symmetric transition matrix, that is, $\mathbf{Q}_{i,j} = p$ for all $i \neq j$ and $1-2p$ otherwise. For comparison purposes, the first source is assumed to be slowly evolving with $p = 0.1$, while the second evolves rapidly with $p = 0.4$. The CAE matrices are given by
\begin{align*}
	\delta^1 = \delta^2 = \begin{bmatrix}
		0 & 10 & 30\\
		30 & 0 & 10\\
		10 & 30 & 0
	\end{bmatrix}.
\end{align*}

The convergence tolerance for the RVI algorithm and the Bisection search are $\epsilon = 10^{-2}$ and $\xi = 10^{-3}$, respectively. The initial search interval for the Bisec-RVI algorithm is set to $\lambda_{\max} = 100$. The weight factor for the DPP method is $V = 100$. We consider that the SA policy selects each source with equal probability, i.e., $f_1 = f_2 = F_{\max}/2$. 

\subsection{Convergence of the Lagrangian Multiplier Update Methods}\label{sec:sim-convergency}
Fig.~\ref{fig:policysearch} compares the performance of the Bisec-RVI and the Insec-RVI methods in the zero-delay case. It can be observed that $F^\lambda$ is piecewise constant and $\mathcal{L}^\lambda$ is piecewise linear and concave. Moreover, the optimal Lagrangian multiplier is found at $\lambda = 10$, which is a breakpoint of $F^\lambda$ and a corner of $\mathcal{L}^\lambda$. 

Interestingly, it can be seen from Fig.~\ref{fig:birviconvergence} that for any $F_{\max}\geq0.8185$ the solution to the unconstrained problem $\mathscr{P}_\textrm{L-MDP}^0$ is also a solution to the constrained problem $\mathscr{P}_\textrm{CMDP}$. It implies that \textit{continuous transmission is unnecessary and has little marginal performance gain.} Additionally, when $F_{\max}\in[0.8185, 1]\cup\{0.8185,0.6012,0.5758,0.3448,0.1724,0\}$, the optimal policy is Markovian deterministic. Otherwise, there is no such $\lambda$ that satisfies $F^\lambda = F_{\max}$. By Theorem~\ref{theorem:structure_of_optimal_policy}, the optimal policy is a combination of two policies in space $\Pi^\textrm{MD}$. 

Fig.~\ref{fig:comparebisa} plots the width of the search interval of the Bisec-RVI and the Insec-RVI methods. The Bisec-RVI method demonstrates a logarithmic decrease in the search interval, converging within 17 iterations in this particular case. However, it requires 13 iterations in the subinterval $[5.2, 13.4]$ to meet the convergence tolerance of $\xi = 10^{-3}$. Also, the number of iterations goes to infinity as $\epsilon\rightarrow0$. On the other hand, the Insec-RVI algorithm requires only $4$ iterations and it terminates immediately when reaching this subinterval.

\subsection{Performance Comparisons of Different Policies}\label{sec:sim-dpp}
Fig.~\ref{fig:obj_ps} shows the average CAE achieved for different success transmission probabilities $p_s$ and delays $d$. Clearly, the average CAE decreases as the channel condition improves. Also, the transmission delay has a significant impact on the system performance. An important observation is that the gap between the DPP policy and the Insec-RVI policy is approximately $1.07$ for the zero-delay case ($d = 0$) and $0$ for the one-delay case ($d = 1$). Therefore, the DPP policy is promising in practice as it achieves near-optimal performance with low computational complexity and good scalability. The SA policy, however, shows a big performance disparity compared to the Insec-RVI policy, especially in relatively reliable channel conditions. This is because the SA policy takes probabilistic actions regardless of the information content, whereas the semantic-aware policy exploits the significance of information to transmit the right packet at the right time. 

Fig.~\ref{fig:obj_cmax} depicts the average CAE achieved by different policies for different $F_{\max}$ and $d$ values. Clearly, the average CAE decreases as $F_{\max}$ increases. Notably, when $F_{\max}$ exceeds a specific value (e.g., $F_{\max}= 0.8185$ for the zero-delay case), the average CAE remains constant despite further increases in $F_{{\max}}$. This occurs because, beyond this threshold, the optimal policy is the one used for the unconstrained problem $\mathscr{P}^0_\textrm{L-MDP}$, as discussed in Fig.~\ref{fig:birviconvergence}. For the one-delay case, this threshold is lowered to $0.3$. This implies that channel delay diminishes the usefulness of certain status updates, resulting in reduced resource requirements. 

Fig.~\ref{fig:qlearning-convergence} shows the performance of the average-cost Q-learning for $\lambda = 2, F_{\max} = 0.4, p_s = 0.4$. It can be seen that a smaller learning rate results in a slower convergence rate but offers more stable and optimal performance. Conversely, a larger learning rate may cause more variations. The Q-learning algorithm with $\alpha = 10^{-3}$ achieves the same performance as the RVI in $400$ iterations. Therefore, it can estimate the optimal policy without prior knowledge of the system dynamics with an appropriately chosen learning rate. 

\subsection{Behavioral Analysis}\label{sec:sim-behavior}
Fig.~\ref{fig:rvi_zero_delay_trans} and Fig.~\ref{fig:dpp_zero_delay_trans} illustrate the transmission frequency of each source in the zero-delay scenario. Despite a small performance gap, they exhibit distinct behaviors. Evidently, the Insec-RVI policy schedules more frequently the slowly evolving source, whereas the DPP policy consistently prioritizes the rapidly evolving source. This could be attributed to the fact that the DPP policy selects the source with higher (one-step) expected costs, disregarding the long-term system performance. Since the slowly evolving source is more likely to remain synced, the DPP policy places more emphasis on the rapidly evolving source. Additionally, both the DPP and Insec-RVI policies increase the importance of the rapidly evolving source in relatively reliable channel conditions.

Fig.~\ref{fig:onedelaytrans} demonstrates the transmission frequency of each source in the one-delay scenario. A notable difference compared to the zero-delay case is that both the Insect-RVI and the DPP policies tend to neglect the rapidly evolving source. This distinction is owing to the ineffectiveness of delayed updates of the rapidly evolving source, as the source is more likely to transition to a different state at the time of actuation. Furthermore, we observe that the overall transmission frequency is lower than $F_{\max}$. It implies that \textit{we can attain optimal performance by strategically utilizing fewer transmissions and exploiting the timing of the important information.}

Now we look into how the Insec-RVI policy caters to state-dependent requirements. Let $T^1$ and $T^2$ denote the transmission frequencies of the slowly evolving and rapidly evolving sources obtained by the optimal policy. Here, $T^1_{i,j} (T^2_{i,j})$ represents the transmission frequency when the slowly (rapidly) evolving source is in state $i$ and the reconstructed state is $j$. When $d = 0$, $p_s = 0.4$, and $F_{{\max}} = 0.8$, the values of $T^1$ and $T^2$ are as obtained as
\begin{align*}
    T^1 \hspace{-3pt} = \hspace{-3pt} \begin{bmatrix}
        0      & 0.045  & 0.058 \\
        0.058  & 0      & 0.045 \\
        0.045  & 0.058  & 0
    \end{bmatrix}\hspace{-3pt},~
    T^2\hspace{-3pt} = \hspace{-3pt}\begin{bmatrix}
        0      & 0.066  & 0.098 \\
        0.098  & 0    & 0.066 \\
        0.066  & 0.098  & 0
    \end{bmatrix}\hspace{-3pt}.
\end{align*}
In this case, we observe a two-level significance. First, the rapidly evolving source is deemed more important than the slowly evolving one (as also demonstrated in Fig.~\ref{fig:rvi_zero_delay_trans} and Fig.~\ref{fig:onedelaytrans}). Second, certain states are scheduled more frequently, indicating that they are more critical than others. In contrast, the transmission frequencies under the distortion-optimal policy are given by
\begin{equation*}
    T^1 \hspace{-3pt} = \hspace{-3pt} \begin{bmatrix}
        0      & 0.058  & 0.058 \\
        0.058  & 0      & 0.058 \\
        0.058  & 0.058  & 0
    \end{bmatrix}\hspace{-3pt},~
    T^2 \hspace{-3pt} = \hspace{-3pt} \begin{bmatrix}
        0   & 0.076  & 0.076 \\
        0.076  & 0   & 0.076 \\
        0.076  & 0.076  & 0
    \end{bmatrix}\hspace{-3pt}.
\end{equation*}
It is clear that the distortion-optimal policy treats each source's states equally and is unable to meet the state-dependent and context-aware requirements in such systems. 

\section{Conclusions}\label{sec:conclusion}
We studied the semantic-aware remote estimation of multiple sources under a resource constraint. We characterized the existence and structure of a constrained optimal policy, showing that the optimal policy is never more complex than a randomized mixture of two stationary deterministic policies. By exploiting these results, we proposed an efficient intersection search method and developed the Insec-RVI and RL policies for scenarios with known and unknown system statistics, respectively. Additionally, we introduced the low-complexity DPP policy as a benchmark. Simulation results underscore that we can reduce the number of uninformative transmissions by exploiting the timing of important information. 

\bibliographystyle{IEEEtran}
\bibliography{abrv, ref}

\end{document}